  \tikzset{%
      body/.style={inner sep=0pt,outer sep=0pt,shape=rectangle,draw,thick},
      marked body/.style={inner sep=0pt,outer sep=0pt,shape=rectangle,draw,pattern=north east lines},
      dimen/.style={<->,>=latex,thin,every rectangle node/.style={fill=white,midway,font=\sffamily}},
      symmetry/.style={dashed,thin},
  }
\begin{document}

\title{Approximation schemes for stochastic compliance-based topology optimization with many loading scenarios}

%\subtitle{Do you have a subtitle?\\ If so, write it here}

\titlerunning{Stochastic compliance optimization}        % if too long for running head

\author{Mohamed Tarek         \and
        Tapabrata Ray
}

\institute{Mohamed Tarek \at
              UNSW Canberra, Northcott Drive, Campbell, ACT 2600 \\
              \email{m.mohamed@student.adfa.edu.au}           %  \\
%             \emph{Present address:} of F. Author  %  if needed
           \and
           Tapabrata Ray \at
              UNSW Canberra, Northcott Drive, Campbell, ACT 2600 \\
              \email{t.ray@adfa.edu.au}
}

\date{Received: date / Accepted: date}
% The correct dates will be entered by the editor

\maketitle

\begin{abstract}
  In this paper, approximation schemes are proposed for handling load uncertainty in compliance-based topology optimization problems, where the uncertainty is described in the form of a set of finitely many loading scenarios. Efficient approximate methods are proposed to approximately evaluate and differentiate either 1) the mean compliance, or 2) a class of scalar-valued function of the individual load compliances such as the weighted sum of the mean and standard deviation. The computational time complexities of the proposed algorithms are analyzed, compared to the exact approaches and then experimentally verified. Finally, some mean compliance minimization problems and some risk-averse compliance minimization problems are solved for verification.

  \keywords{stochastic optimization \and risk-averse optimization \and compliance minimization \and SIMP \and  MMA \and FEA}

\end{abstract}

\section{Introduction}

  In practice, a topology optimization problem's data, e.g. the load applied or material properties, are typically uncertain and the optimal solution can often be sensitive to the specific values of such data, where a small change in some of the data can cause a significant change in the objective value or render the optimal solution obtained infeasible. There are multiple ways to model such data uncertainty. Robust optimization (RO), stochastic optimization (SO), risk-averse optimization (RAO) and reliability-based design optimization (RBDO) are some of the terms used in optimization literature to describe a plethora of techniques for handling uncertainty in the data of an optimization problem for different uncertainty models. In this paper, the focus will be on SO and RAO. For more on RO, the readers are refereed to \cite{Bertsimas2011} and \cite{AharonBen-Tal2009}. And for more on RBDO, the readers are referred to \cite{Choi2007} and \cite{Youn2004}.

  In SO and RAO, the data is assumed to follow a known probability distribution \citep{Shapiro2009,Choi2007}. Let $\bm{f}$ be a random load and $\bm{x}$ be the topology design variables. A probabilistic constraint can be defined as $P(g(\bm{x}; \bm{f}) \leq 0) \geq \eta$ where $\bm{f}$ follows a known probability distribution. This constraint is often called a chance constraint or a reliability constraint in RBDO. The objective of an SO problem is typically either deterministic or some probabilistic function such as the mean of a function of the random variable, its variance, standard deviation or a weighted sum of such terms. RAO can be considered a sub-field of SO borrowing concepts from risk analysis in mathematical economics to define various risk measures and tractable approximations to be used in objectives and/or constraints in SO. One such risk measure is the conditional value-at-risk (CVaR) \citep{Shapiro2009}. Other more traditional risk measures include the weighted sum of the mean and variance of a function or the weighted sum of the mean and standard deviation. For more on SO and RAO, the reader is referred to Shapiro et al. \cite{Shapiro2009}.

  In topology optimization literature, the term "\textit{robust topology optimization}" is often used to refer to minimizing the weighted sum of the mean, and variance or standard deviation of a function subject to probabilistic uncertainty \citep{Dunning2013,Zhao2014,Cuellar2018}. However, this use of the term "\textit{robust optimization}" is not consistent with the standard definition of RO in optimization theory literature, e.g. Ben-Tal et al. \cite{AharonBen-Tal2009}. The more compliant term to be used in this paper is \textit{stochastic topology optimization} or \textit{risk-averse topology optimization}.

  Many works in literature tackled the problem of load uncertainty in mean compliance minimization problems (\cite{Guest2008, Dunning2011, Zhao2014, Zhang2017a, Hutchinson1990, Liu2018a,tarek2021robust}). For a more detailed account of the literature on this, the readers are referred to \cite{tarek2021robust}. Of all the works reviewed, only two works (\cite{Zhang2017a,tarek2021robust}) dealt with data-driven design with a finite number of loading scenarios. The loading scenarios can be data collected or sampled from the distributions. The main limitation of the work by \cite{Zhang2017a} is that it can only be used to minimize the mean compliance which is not risk-averse since at the optimal solution, the compliance can still be very high for some probable load scenarios even if the mean compliance is minimized. In the work by (\cite{tarek2021robust}), a few exact methods for handling a large number of loading scenarios in compliance-based problems were proposed based on the singular value decomposition (SVD), where the loading matrix $\bm{F}$ has a low rank and/or only a few degrees of freedom are loaded. However when these conditions are not satisfied, the SVD based approach may not be efficient enough. In particular, there are 2 limitations to the SVD-based approaches:
  \begin{enumerate}
    \item The computational time complexity of computing the SVD of $\bm{F}$ is \newline $O(min(L, n_{dofs})^2 max(L, n_{dofs}))$ if the loads are dense, where $L$ is the number of loading scenarios and $n_{dofs}$ is the number of degrees of freedom, which can be computationally prohibitive for large problems.
    \item The load matrix may not be low rank.
  \end{enumerate}

  Some authors also studied risk-averse compliance minimization by considering the weighted sum of the mean and variance, the weighted sum of the mean and standard deviation, as well as other risk measures. (\cite{Dunning2013, Zhao2014b, Chen2010, Martinez-Frutos2016, Cuellar2018, Martinez-Frutos2018, Garcia-Lopez2013, Kriegesmann2019}). For a more detailed account of the literature on this, the readers are referred to \cite{tarek2021robust}.

  In this paper, a few computationally efficient, SVD-free approximation schemes are proposed to estimate the value and gradient of:
  \begin{enumerate}
      \item The mean compliance
      \item A class of scalar-valued functions of load compliances satisfying a few conditions, e.g. the standard deviation of the compliance
  \end{enumerate}
  subject to a finite number of possible loading scenarios. These approaches can be used in risk-averse compliance minimization. The rest of this paper is organized as follows. The proposed approaches for handling load uncertainty in continuum compliance problems in the form of a large, finite number of loading scenarios are detailed in sections \ref{sec:proposed_mean} and \ref{sec:proposed_risk}. The experiments used and the implementations are then described in section \ref{sec:exp_impl}. Finally, the results are presented and discussed in section \ref{sec:results} before concluding in section \ref{sec:conclusion}.

\section{Proposed algorithms}

  \subsection{Solid isotropic material with penalization}

    In this paper, the solid isotropic material with penalization (SIMP) method \citep{Bendsoe1989,Sigmund2001,Rojas-Labanda2015} is used to solve the topology optimization problems. Let $0 \leq x_e \leq 1$ be the decision variable associated with element $e$ in the ground mesh and $\bm{x}$ be the vector of such decision variables. Let $\rho_e$ be the pseudo-density of element $e$, and $\bm{\rho}(\bm{x})$ be the vector of such variables after sequentially applying to $\bm{x}$:
    \begin{enumerate}
      \item A chequerboard density filter typically of the form $f_1(\bm{x}) = \bm{A} \bm{x}$ for some constant matrix $\bm{A}$ \citep{Bendsoe2004}, 
      \item An interpolation of the form $f_2(y) = (1 - x_{min})y + x_{min}$ applied element-wise for some small $x_{min} > 0$ such as $0.001$, 
      \item A penalty such as the power penalty $f_3(z) = z^p$ applied element-wise for some penalty value $p$, and
      \item A projection method such as the regularized Heaviside projection \citep{Guest2004} applied element-wise.
    \end{enumerate}
    The compliance of the discretized design is defined as: $C = \bm{u}^T\bm{K}\bm{u} = \bm{f}^T\bm{K}^{-1}\bm{f}$ where $\bm{K}$ is the stiffness matrix, $\bm{f}$ is the load vector, and $\bm{u} = \bm{K}^{-1}\bm{f}$ is the displacement vector. The relationship between the global and element stiffness matrices is given by $\bm{K} = \sum\limits_e \rho_e \bm{K}_e$ where $\bm{K}_e$ is the hyper-sparse element stiffness matrix of element $e$ with the same size as $\bm{K}$.

  \subsection{Approximating the compliance sample mean and its gradient} \label{sec:proposed_mean}

    The mean compliance can be formulated as a trace function: $\mu_C = \frac{1}{L} tr(\bm{F}^T \bm{K}^{-1} \bm{F})$. \cite{Zhang2017a} showed that Hutchinson's trace estimator \cite{Hutchinson1990} can be used to accurately estimate the compliance for a large number of load scenarios using a relatively small number of linear system solves. Hutchinson's trace estimator is given by:
      \begin{align}
        & tr(\bm{A}) = E(\bm{v}^T \bm{A} \bm{v}) \approx \frac{1}{N} \sum_{i=1}^{N} \bm{v}_i^T \bm{A} \bm{v}_i
      \end{align}
      where $\bm{v}$ is a random vector with each element independently distributed with 0 mean and unit variance, $\bm{v}_i$ are samples of the random vector $\bm{v}$, also known as probing vectors, and $N$ is the number of such probing vectors. One common distribution used for the elements of $\bm{v}$ is the Rademacher distribution which is a discrete distribution with support $\{-1, 1\}$ each of which has a probability of 0.5. Hutchinson proved that an estimator with the Rademacher distribution for $\bm{v}$ will have the least variance among all other distributions. Let $\bm{A} = \bm{F}^T \bm{K}^{-1} \bm{F}$. The number of linear system solves required to compute the mean compliance $\frac{1}{L} tr(\bm{A})$ using the naive approach is $L$. However, when using Hutchinson's estimator, that number becomes the number of probing vectors $N$. In general, a good accuracy can be obtained for $N \ll L$. Other than the linear system solves, the time complexity of the remaining work using the trace estimation method is $O(N \times n_{dofs} \times L)$ mostly spent on finding $\bm{F} \bm{v}_i$ for all $i$. If only a small number of degrees of freedom $n_{loaded}$ are loaded, the complexity of the remaining work reduces to $O(N \times n_{loaded} \times L)$.

      Let $\bm{z}_i = \bm{K}^{-1} \bm{F} \bm{v}_i$ be cached from the trace computation. The elements of the gradient of the trace estimate with respect to $\bm{\rho}$ are given by:
      \begin{align}
       \mu_C(\bm{\rho}) & = \frac{1}{L \times N} \sum_i^N \bm{z}_i^T \bm{K} \bm{z}_i \\
       \frac{\partial \mu_C}{\partial \rho_e} & = \frac{1}{L \times N} \sum_{i=1}^N -\bm{z}_i^T \bm{K}_e \bm{z}_i
      \end{align}
      The additional time complexity of computing the gradient of the trace estimate after computing the trace is therefore $O(N \times n_E)$. For a detailed derivation of the partial above, see the appendix.

      \begin{table*}
       \centering
       \caption{Summary of the computational cost of the algorithms discussed to calculate the mean compliance and its gradient. \#Lin is the number of linear system solves required.}
       \begin{tabular}{| m{3cm} | m{0.7cm} | m{4cm} | m{5cm}|} 
        \hline
        \multirow{2}{3em}{Method} & \multirow{2}{2em}{\#Lin} & \multicolumn{2}{c|}{Time complexity of additional work} \\\cline{3-4}
        & & Dense & Sparse \\
        \hline
        \hline
        Exact & \(L\) & \(O(L \times (n_{dofs} + n_E))\) & \(O(L \times (n_{loaded} + n_E))\) \\
        \hline
        Trace estimation & \(N\) & \(O(N \times (n_{dofs} \times L + n_E))\) & \(O(N \times (L \times n_{loaded} + n_E))\) \\
        \hline
       \end{tabular}
       \label{tab:perf_mean}
      \end{table*}

  \subsection{Approximating scalar-valued function of load compliances and its gradient} \label{sec:proposed_risk}

    The above scheme for approximating the sample mean compliance can be generalized to handle the sample variance and standard deviations. The sample variance of the compliance $C$ is given by $\sigma_C^2 = \frac{1}{L-1} \sum_{i=1}^L (C_i - \mu_C)^2$. The sample standard deviation $\sigma_C$ is the square root of the variance. Let $\bm{C}$ be the vector of compliances $C_i$, one for each load scenario. In vector form, $\sigma_C^2 = \frac{1}{L-1} (\bm{C} - \mu_C \bm{1})^T (\bm{C} - \mu_C \bm{1})$. $\bm{C} = diag(\bm{A})$ is the diagonal of the matrix $\bm{A} = \bm{F}^T \bm{K}^{-1} \bm{F}$.

    One can view the load compliances $\bm{C}$ as the diagonal of the matrix $\bm{F}^T \bm{K}^{-1} \bm{F}$. One way to estimate it is therefore to use a diagonal estimation method. One diagonal estimator directly related to Hutchinson's trace estimator was proposed by Bekas et al. \cite{Bekas2007}. The diagonal estimator can be written as follows:
      \begin{align}
       & diag(\bm{A}) = E(\bm{D}_{\bm{v}} \bm{A} \bm{v}) \approx \frac{1}{N} \sum_{i=1}^N \bm{D}_{\bm{v}_i} \bm{A} \bm{v}_i
      \end{align}
    where $diag(\bm{A})$ is the diagonal of $\bm{A}$ as a vector, $\bm{D}_{\bm{v}}$ is the diagonal matrix with a diagonal $\bm{v}$, $\bm{v}$ is a random vector distributed much like in Hutchinson's estimator, $\bm{v}_i$ are the probing vector instances of $\bm{v}$ and $N$ is the number of probing vectors. The sum of the elements of the diagonal estimator above gives us Hutchinson's trace estimator. Let $\bm{A} = \bm{F}^T \bm{K}^{-1} \bm{F}$:
      \begin{align}
       & \bm{C} = diag(\bm{\bm{F}^T \bm{K}^{-1} \bm{F}}) \approx \frac{1}{N} \sum_{i=1}^N \bm{D}_{\bm{v}_i} \bm{\bm{F}^T \bm{K}^{-1} \bm{F}} \bm{v}_i
      \end{align}
    Bekas et al. showed that using the deterministic basis of a Hadamard matrix as probing vectors $\bm{v}_i$ rather than random vectors increases the accuracy of the diagonal estimator. In this paper, we do the same and use columns of a Hadamard matrix as probing vectors for the diagonal estimator. Given the diagonal estimate assuming $N \ll L$, one can estimate $\bm{C}$ using $N$ linear system solves, which can then be used to compute the sample variance and standard deviation. Other than the linear system solves, the additional work required above has a time complexity of $O(N \times n_{dofs} \times L)$. But if only a few $n_{loaded}$ degrees of freedom are loaded, the time complexity of the remaining work goes down to $O(N \times n_{loaded} \times L)$.

    The Jacobian of the compliances vector $\bm{C}$ with respect to $\bm{\rho}$, $\nabla_{\bm{\rho}} \bm{C}$ is simply the stacking of the transposes of the gradients of $C_i$, $\nabla_{\bm{\rho}} C_i$, for all $i$ to form a matrix with $L$ rows and $n_E$ columns. The Jacobian of the estimate of $\bm{C}$ is given by:
      \begin{align}
       & \nabla_{\bm{\rho}} \bm{C} \approx \frac{1}{N} \sum_{i=1}^N \bm{D}_{\bm{v}_i} \bm{\bm{F}}^T \bm{K}^{-1} \bm{F} \bm{v}_i = \frac{1}{N} \sum_{i=1}^N \bm{D}_{\bm{v}_i} \bm{F}^T \nabla_{\bm{\rho}} \bm{t}_i
      \end{align}
    where $\bm{t}_i = \bm{K}^{-1} \bm{F} \bm{v}_i$. The derivative $\frac{\partial \bm{t}_i}{\partial \rho_e} = -\bm{K}^{-1} \bm{K}_e \bm{t}_i$. Therefore:
      \begin{align}
       & \frac{\partial \bm{C}}{\partial \rho_e} \approx \frac{1}{N} \sum_{i=1}^N \bm{D}_{\bm{v}_i} \bm{\bm{F}}^T \bm{K}^{-1} \bm{K}_e \bm{K}^{-1} \bm{\bm{F}} \bm{v}_i
      \end{align}
    Note that to find the Jacobian in this case, one requires $L$ linear system solves to find $\bm{K}^{-1} \bm{F}$. However, with this many linear system solves one can use the exact method so there is no merit to using the diagonal estimation approach. This means that if the full Jacobian is required, the diagonal estimation method here is the wrong choice.

    However if only interested in $\nabla_{\bm{\rho}} \bm{C}(\bm{\rho})^T \bm{w}$, a more efficient approach can be used:
      \begin{align}
       \nabla_{\bm{\rho}} \bm{C}(\bm{\rho})^T \bm{w} & = \nabla_{\bm{\rho}} (\bm{C}(\bm{\rho})^T \bm{w}) = \nabla_{\bm{\rho}} tr(\bm{D}_{\bm{w}} \bm{F}^T \bm{K}^{-1} \bm{F})
      \end{align}
    Let $\bm{r}_i = \bm{K}^{-1} \bm{F} \bm{v}_i$ which are cached from the function value calculation, and let $\bm{t}_i = \bm{K}^{-1} \bm{F} \bm{D}_{\bm{w}} \bm{v}_i$.
      \begin{align}
       \frac{\partial \bm{C}(\bm{\rho})^T \bm{w}}{\partial \rho_e} & = -tr(\bm{D}_{\bm{w}} \bm{F}^T \bm{K}^{-1} \bm{K}_e \bm{K}^{-1} \bm{F}) \\
       & \approx -\frac{1}{N} \sum_{i=1}^N \bm{v}_i^T \bm{D}_{\bm{w}} \bm{F}^T \bm{K}^{-1} \bm{K}_e \bm{K}^{-1} \bm{F} \bm{v}_i \\
       & = -\frac{1}{N} \sum_{i=1}^N \bm{t}_i^T \bm{K}_e \bm{r}_i 
      \end{align}
    This means that at a cost of an additional $N$ linear system solves, one can compute the vectors $\bm{t}_i$ and then find the gradient of $\bm{C}^T \bm{w}$. Other than the linear system solves, the remaining work has a time complexity of $O(N \times (n_{dofs} \times L + n_E))$, $O(N \times n_{dofs} \times L)$ from the accumulation of $\bm{F} \bm{D}_{\bm{w}} \bm{v}_i$ and $O(N \times n_E)$ to evaluate the gradient given $\bm{t}_i$ and $\bm{r}_i$ for all $i$. If only a few degrees of freedom $n_{loaded}$ are loaded, then the complexity goes down to $O(N \times (n_{loaded} \times L + n_E))$.

    \begin{table*}
        \centering
        \caption{Summary of the computational cost of the algorithms discussed to calculate the load compliances $\bm{C}$ as well as $\nabla_{\bm{\rho}} \bm{C}^T \bm{w}$ for any vector $\bm{w}$. \#Lin is the number of linear system solves required. This can be used to compute the variance, standard deviation as well as other scalar-valued functions of $\bm{C}$.}
        \begin{tabular}{|m{4cm} | m{0.7cm} | m{4cm} | m{5cm} |} 
         \hline
         \multirow{2}{3em}{Method} & \multirow{2}{2em}{\#Lin} & \multicolumn{2}{c|}{Time complexity of additional work} \\\cline{3-4}
         & & Dense & Sparse \\
         \hline
         \hline
         Exact & \(L\) & \(O(L \times (n_{dofs} + n_E))\) & \(O(L \times (n_{loaded} + n_E))\) \\
         \hline
         Diagonal estimation & \(2N\) & \(O(N \times (n_{dofs} \times L + n_E))\) & \(O(N \times (n_{loaded} \times L + n_E))\) \\
        \hline
       \end{tabular}
       \label{tab:perf_scalar}
    \end{table*}

\section{Setup and Implementation} \label{sec:exp_impl}

  In this section, the most important implementation details and algorithm settings used in the experiments are presented.

  \subsection{Test problems}

    \begin{figure}
  \centering
  \resizebox{0.4\textwidth}{!}{
    \begin{tikzpicture}
        \draw[fill,color=gray!70] (0,-0.1) rectangle (-0.3,2.1);
        \node [align=center, body,line width=1.2pt,minimum height=2cm,minimum width=8cm,anchor=south west] (body1) at (0,0) {};
        \draw (body1.south east) -- ++(1.6,0) coordinate (D1) -- +(5pt,0);
        \draw (body1.north east) -- ++(1.6,0) coordinate (D2) -- +(5pt,0);
        \draw [dimen] (D1) -- (D2) node {40mm};

        \draw (body1.north west) -- ++(0,2) coordinate (D1) -- +(0,5pt);
        \draw (body1.north east) -- ++(0,2) coordinate (D2) -- +(0,5pt);
        \draw [dimen] (D1) -- (D2) node {160mm};

        \draw[->,line width=1pt] (8,1) -- (9,1) -- (9,0.2);
        \node (arrowhead) at (8.7,0.25) {$\bm{F}_1$};

        \draw (body1.north west) -- ++(0,1.5) coordinate (D1) -- +(0,5pt);
        \draw (4,2) -- ++(0,1.5) coordinate (D2) -- +(0,5pt);
        \draw [dimen] (D1) -- (D2) node {80mm};

        \draw[->,line width=1pt] (3.4,2.6) -- (4,2);
        \node (arrowhead) at (3.2,2.8) {$\bm{F}_2$};

        \draw (body1.south west) -- ++(0,-1.5) coordinate (D1) -- +(0,5pt);
        \draw (6,0) -- ++(0,-1.5) coordinate (D2) -- +(0,5pt);
        \draw [dimen] (D1) -- (D2) node {120mm};

        \draw[->,line width=1pt] (6,0) -- (5.4,-0.6);
        \node (arrowhead) at (5.2,-0.8) {$\bm{F}_3$};

        \draw[->,line width=1pt] (0.3,2.3) -- (0.1,2);
        \draw[->,line width=1pt] (0.6,2.3) -- (0.8,2);
        \draw[->,line width=1pt] (1.2,2.3) -- (1.1,2);
        \draw[->,line width=1pt] (1.5,2.3) -- (1.8,2);
        \draw[->,line width=1pt] (2.2,2.3) -- (2.0,2);
        \draw[->,line width=1pt] (2.5,2.3) -- (2.8,2);
        \draw[->,line width=1pt] (3.1,2.3) -- (3.0,2);
        \draw[->,line width=1pt] (3.3,2.3) -- (3.5,2);
        \draw[->,line width=1pt] (4.5,2.3) -- (4.4,2);
        \draw[->,line width=1pt] (4.8,2.3) -- (4.9,2);
        \draw[->,line width=1pt] (5.3,2.3) -- (5.1,2);
        \draw[->,line width=1pt] (5.6,2.3) -- (5.9,2);
        \draw[->,line width=1pt] (6.2,2.3) -- (6.0,2);
        \draw[->,line width=1pt] (6.5,2.3) -- (6.9,2);
        \draw[->,line width=1pt] (7.2,2.3) -- (7.2,2);
        \draw[->,line width=1pt] (7.5,2.3) -- (7.7,2);
        \draw[->,line width=1pt] (8.1,2.3) -- (7.9,2);

        \draw[->,line width=1pt] (0.3,-0.3) -- (0.3,0);
        \draw[->,line width=1pt] (0.8,-0.3) -- (0.6,0);
        \draw[->,line width=1pt] (1.2,-0.3) -- (1.1,0);
        \draw[->,line width=1pt] (1.5,-0.3) -- (1.5,0);
        \draw[->,line width=1pt] (2.0,-0.3) -- (2.2,0);
        \draw[->,line width=1pt] (2.5,-0.3) -- (2.8,0);
        \draw[->,line width=1pt] (3.1,-0.3) -- (3.0,0);
        \draw[->,line width=1pt] (3.5,-0.3) -- (3.2,0);
        \draw[->,line width=1pt] (3.8,-0.3) -- (3.7,0);
        \draw[->,line width=1pt] (4.1,-0.3) -- (4.1,0);
        \draw[->,line width=1pt] (4.4,-0.3) -- (4.4,0);
        \draw[->,line width=1pt] (5.0,-0.3) -- (4.9,0);
        \draw[->,line width=1pt] (5.3,-0.3) -- (5.3,0);
        \draw[->,line width=1pt] (5.5,-0.3) -- (5.6,0);
        \draw[->,line width=1pt] (6.0,-0.3) -- (6.2,0);
        \draw[->,line width=1pt] (6.6,-0.3) -- (6.7,0);
        \draw[->,line width=1pt] (7.3,-0.3) -- (7.2,0);
        \draw[->,line width=1pt] (7.7,-0.3) -- (7.5,0);
        \draw[->,line width=1pt] (8.1,-0.3) -- (7.9,0);

        \draw[->,line width=1pt] (8.3,0.3) -- (8.0,0.1);
        \draw[->,line width=1pt] (8.3,0.5) -- (8.0,0.8);
        \draw[->,line width=1pt] (8.3,1.0) -- (8.0,1.2);
        \draw[->,line width=1pt] (8.3,1.5) -- (8.0,1.5);
        \draw[->,line width=1pt] (8.3,1.7) -- (8.0,1.9);
    \end{tikzpicture} \newline
  }
  \caption{Cantilever beam problem. $\bm{F}_2$ and $\bm{F}_3$ are at 45 degree angles.}
  \label{fig:CantBeam}
\end{figure}


    The 2D cantilever beam problem shown in Figure \ref{fig:CantBeam} was used to run the experiments. A ground mesh of plane stress quadrilateral elements was used, where each element is a square of side length $1 \text{ mm}$, and a sheet thickness of $1 \text{ mm}$. Linear iso-parametric interpolation functions were used for the field and geometric basis functions. A Young's modulus of 1 MPa and Poisson's ratio of 0.3 were used. Finally, a chequerboard density filter for unstructured meshes was used with a radius of 2 mm \citep{Huang2010a}. A 3D version of the problem above was also solved.

    Three variants of the cantilever beam problem were solved:
    \begin{enumerate}
      \item Minimization of the mean compliance $\mu_C$ subject to a volume constraint with a volume fraction of 0.4,
      \item Minimization of a weighted sum of the mean and standard deviation (mean-std) of the compliance $\mu_C + 2.0 \sigma_C$ subject to a volume constraint with a volume fraction of 0.4, and
      \item Volume minimization subject to a maximum compliance constraint with a compliance threshold of $70000 \text{ Nmm}$.
    \end{enumerate}
    A total of 1000 load scenarios were sampled from:
    \begin{align}
      \bm{f}_i = s_1 \bm{F}_1 + s_2 \bm{F}_2 + s_3 \bm{F}_3 + \frac{1}{R - 3} \sum_{j=4}^{R} s_j \bm{F}_j
    \end{align}
    where $\bm{F}_1$, $\bm{F}_2$ and $\bm{F}_3$ are unit vectors with directions as shown in Figure \ref{fig:CantBeam} and $R$ is an integer greater than or equal to 4. $\bm{F}_2$ and $\bm{F}_3$ are at 45 degrees. $s_1$, $s_2$ and $s_3$ are identically and independently uniformly distributed random variables between -2 and 2. $\bm{F}_j$ for $j$ in $4 \dots R$ are vectors with non-zeros at all the surface degrees of freedom without a Dirichlet boundary condition. The non-zero values are identically and independently normally distributed random variables with mean 0 and standard deviation 1. $s_j$ for $j$ in $4 \dots R$ are also identically and independently normally distributed random variables with mean 0 and standard deviation 1. The same loading scenarios were used for the 3 test problems. Let $\bm{F}$ be the matrix whose columns are the sampled $\bm{f}_i$ vectors. Given the way the loading scenarios have been defined the rank of $\bm{F}$ is almost certainly going to be around $R$.

  \subsection{Software}

    All the topology optimization algorithms described in this paper were implemented in TopOpt.jl \footnote{https://github.com/JuliaTopOpt/TopOpt.jl} using the Julia programming language \citep{Bezanson2014} for handling generic unstructured, iso-parametric meshes.

  \subsection{Settings}

    The value of $x_{min}$ used was $0.001$ for all problems and algorithms. Penalization was done prior to interpolation to calculate $\bm{\rho}$ from $\bm{x}$. A power penalty function and a regularized Heaviside projection were used. All of the problems were solved using 2 continuation SIMP routines. The first incremented the penalty value from $p = 1$ to $p = 6$ in increments of 0.5. Then the Heaviside projection parameter $\beta$ was incremented from $\beta = 0$ to $\beta = 20$ in increments of 4 keeping the penalty value fixed at 6. An exponentially decreasing tolerance from $1e-3$ to $1e-4$ was used for both continuations. 

    The mean and mean-std compliance minimization SIMP subproblems problems were solved using the method of moving asymptotes (MMA) algorithm \citep{Svanberg1987}. MMA parameters of $s_{init} = 0.5$, $s_{incr} = 1.1$ and $s_{decr} = 0.7$ were used as defined in the MMA paper with a maximum of 1000 iterations for each subproblem. The dual problem of the convex approximation was solved using a log-barrier box-constrained nonlinear optimization solver, where the barrier problem was solved using the nonlinear CG algorithm for unconstrained nonlinear optimization \citep{Nocedal2006} as implemented in Optim.jl \footnote{https://github.com/JuliaNLSolvers/Optim.jl} \citep{KMogensen2018}. The nonlinear CG itself used the line search algorithm from \cite{Hager2006} as implemented in LineSearches.jl \footnote{https://github.com/JuliaNLSolvers/LineSearches.jl}. The stopping criteria used was the one adopted by the KKT solver, IPOPT \citep{Wachter2006}. This stopping criteria is less scale sensitive than the KKT residual as it scales down the residual by a value proportional to the mean absolute value of the Lagrangian multipliers.

\section{Accuracy and speed comparison}

  In this section, the accuracy and speed of the approximations proposed are presented and compared to the exact values. A method to boost the accuracy of approximations is also presented and mathematically analyzed. Tables \ref{tab:time_mean} and \ref{tab:time_std} show the values computed for the mean compliance $\mu_C$ and its standard deviation $\sigma_C$ respectively together with the time required to compute their values and gradients using: the naive exact approach and the approximate method with trace or diagonal estimation using 100 Rademacher-distributed or Hadamard basis probing vectors. A value of $R = 10$ was used.
  \begin{table*}
   \centering
   \caption{The table shows the function values of $\mu_C$ computed using the exact method and the approximate method of trace estimation with 100 Rademacher-distributed or Hadamard basis probing vectors for a full ground mesh design. The table also shows the time required to compute or approximate $\mu_C$ and its gradient in each case. A value of $R = 10$ was used here.}
   \begin{tabular}{|c|c|c|}
    \hline
    Method & $\mu_C$ (Nmm) & Time (s) \\
    \hline
    \hline
    Naive Exact & 3328.7 & 24.2 \\
    \hline
    Trace estimation & 3596.9 (Rademacher) / 3486.7 (Hadamard) & 2.6 \\
    \hline
   \end{tabular}
   \label{tab:time_mean}
  \end{table*}

  \begin{table*}
   \centering
   \caption{The table shows the function values of $\sigma_C$ and its gradients for a full ground mesh computed using the exact method and the approximate method of diagonal estimation with 100 Rademacher-distributed or Hadamard basis probing vectors. The table also shows the time required to compute the exact or approximate $\sigma_C$ and its gradient in each case. A value of $R = 10$ was used here. Note the extreme bias in the estimate so a correction step is necessary.}
   \begin{tabular}{|c|c|c|}
    \hline
    Method & $\sigma_C$ (Nmm) & Time (s) \\
    \hline
    \hline
    Exact & 4172.8 & 28.0 \\
    \hline
    Diagonal estimation & 9774.8 (Rademacher) / 10173.3 (Hadamard) & 5.2 \\
    \hline
   \end{tabular}
   \label{tab:time_std}
  \end{table*}

  \begin{figure*}
    \begin{subfigure}[t]{0.45\textwidth}
      \centering
      \includegraphics[width=1\textwidth]{./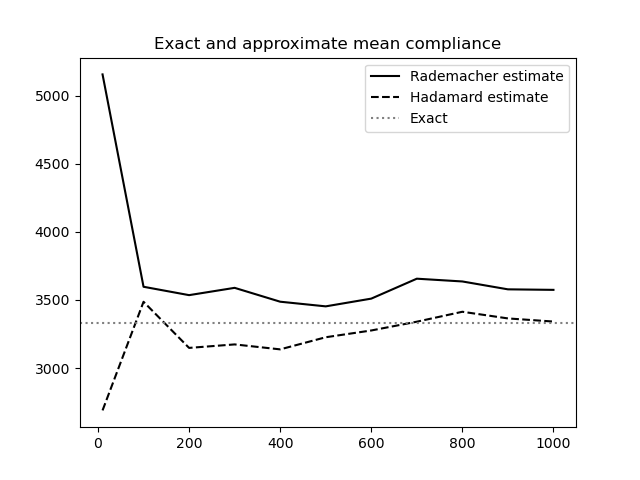}
      \caption{Mean compliance estimate using different numbers of probing vectors in the trace estimation method.}
      \label{fig:exact_approx_mean}
    \end{subfigure} \hfill
    \begin{subfigure}[t]{0.45\textwidth}
      \centering
      \includegraphics[width=1\textwidth]{./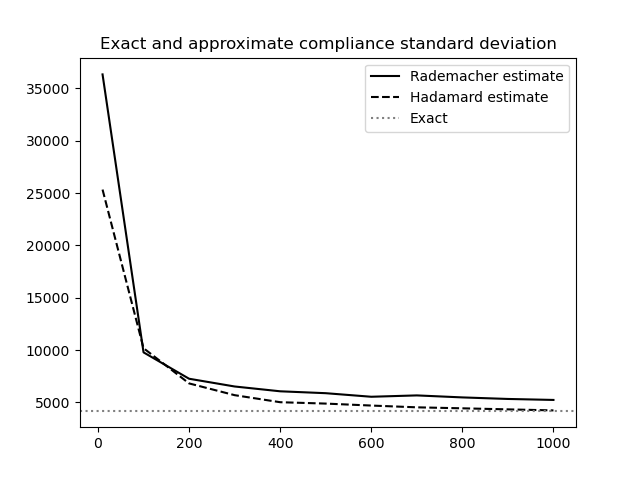}
      \caption{Compliance standard deviation estimate using different numbers of probing vectors in the diagonal estimation method.}
      \label{fig:exact_approx_std}
    \end{subfigure}
    \caption{Accuracy profile of the trace and diagonal estimation methods for estimating the mean compliance and its standard deviation using 10, 100, 200, 300, 400, 500, 600, 700, 800, 900 and 1000 probing vectors. A value of $R = 10$ was used here.}
    \label{fig:exact_approx}
  \end{figure*}

  As expected, the proposed approximation methods take a fraction of the time it takes to compute the exact mean and mean-std compliances using the approaches. Estimates of the mean compliance and its standard deviation for a full ground mesh using different numbers of Rademacher-distributed and Hadamard basis probing vectors are shown in Figure \ref{fig:exact_approx}. In this case, the estimates obtained using the Hadamard basis were always closer to the exact value than that of the Rademacher-distributed one. However, this depends on the order by which the Hadamard basis vectors are used.

\section{Bias correction}

  While the Hadamard estimate is converging faster to the exact value compared to the Rademacher one in the above case as the number of probing vectors increases, it is still quite far in the case of the standard deviation unless a large number ofIf we have a constraint over the weighted sum of the mean compliance and its standard deviation, this huge discrepancy from the exact quantity renders the approximate method useless. In this section, it will be shown that usually the estimate can be multiplied by a correcting factor to significantly improve its accuracy. This will be demonstrated experimentally and then mathematically analyzed. When performing topology optimization, the function value and its gradient need to be computed repeatedly. So if we only need to compute the correcting factor a few times, we can still save a lot of computational time when using the approximate method without losing too much accuracy.

\subsection{Experiments}

  Using a full ground mesh to calculate the correcting factor and only 10 probing vectors and $R = 10$, the ratio between the exact mean compliance and the trace estimate using Hadamard basis probing vectors was 1.238. Similarly, the ratio of the exact compliance standard deviation to the estimated value was 0.165. Figures \ref{fig:correcting_mean} and \ref{fig:correcting_std} show the distributions of the ratios of the exact value to the estimated one for the mean compliance and its standard deviation respectively. The same 10 Hadamard basis probing vectors were used and each figure was generated using 500 random designs. For each figure, the random designs were sampled from a truncated normal distributions with a different mean and a standard deviation of 0.2, truncated between 0 and 1. One can see that using the same probing vectors, the ratio between the exact and estimated values doesn't change significantly even when changing the mean volume by changing the mean of the truncated normal distribution. One can see that the correcting ratio that can multiply the estimated mean compliance or standard deviation to get the exact one is not very sensitive to the underlying design.

  \begin{figure*}[!htbp]
    \begin{subfigure}{0.3\textwidth}
      \centering
      \includegraphics[width=1\textwidth]{./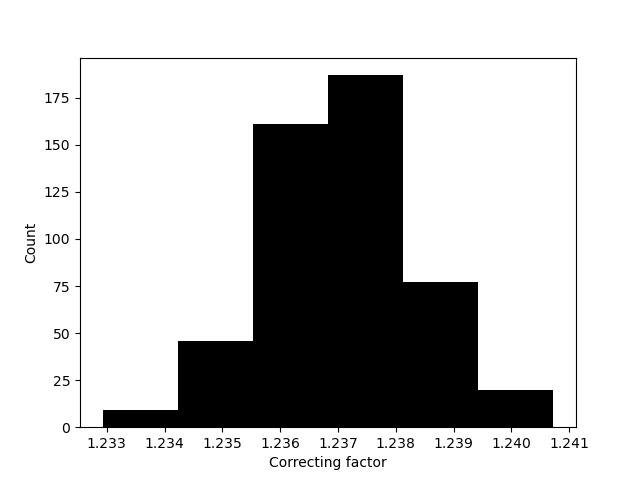}
      \caption{Mean = 0.1}
    \end{subfigure}
    \begin{subfigure}{0.3\textwidth}
      \centering
      \includegraphics[width=1\textwidth]{./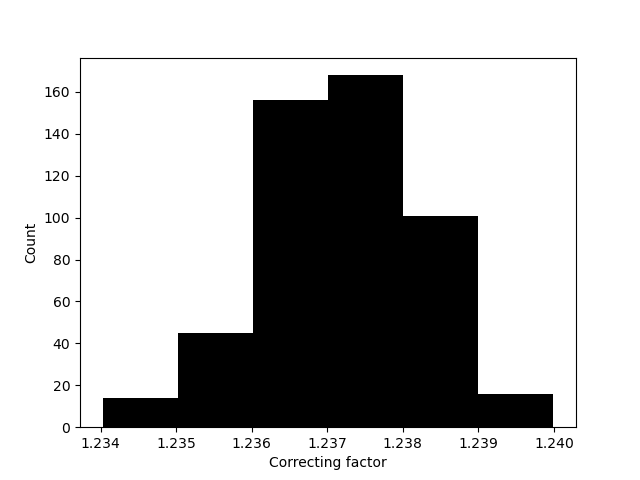}
      \caption{Mean = 0.3}
    \end{subfigure}
    \begin{subfigure}{0.3\textwidth}
      \centering
      \includegraphics[width=1\textwidth]{./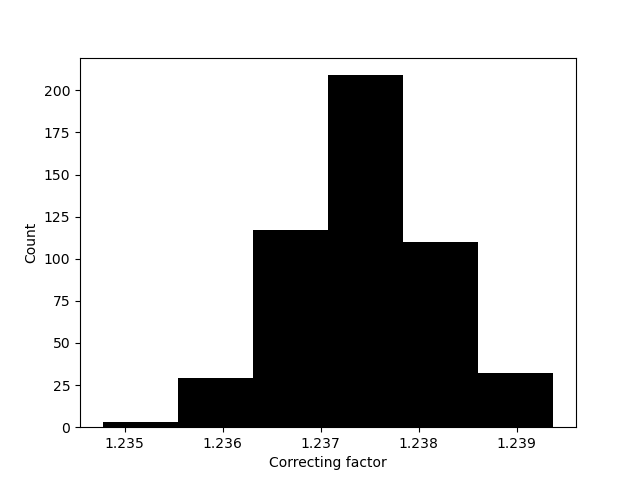}
      \caption{Mean = 0.5}
    \end{subfigure} 
    \begin{subfigure}{0.3\textwidth}
      \centering
      \includegraphics[width=1\textwidth]{./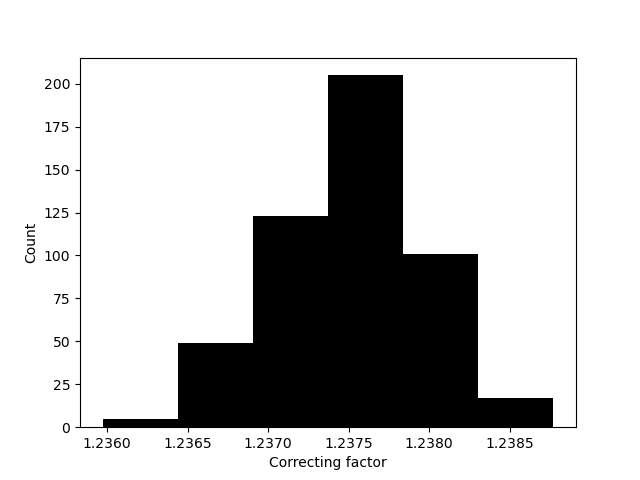}
      \caption{Mean = 0.7}
    \end{subfigure} 
    \begin{subfigure}{0.3\textwidth}
      \centering
      \includegraphics[width=1\textwidth]{./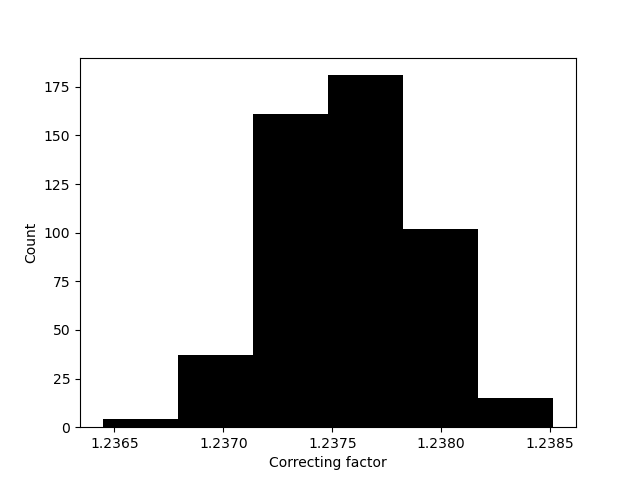}
      \caption{Mean = 0.9}
    \end{subfigure}
    \caption{Histograms of the ratio between the exact mean compliance and the trace estimate using 10 Hadamard basis probing vectors. In each figure, 500 designs were randomly sampled where each element's pseudo-density is sampled from a truncated normal distribution with the means indicated above and a standard deviation of 0.2, truncated between 0 and 1. A value of $R = 10$ was used here.}
    \label{fig:correcting_mean}
  \end{figure*}
  \begin{figure*}[!htbp]
    \begin{subfigure}[t]{0.3\textwidth}
      \centering
      \includegraphics[width=1\textwidth]{./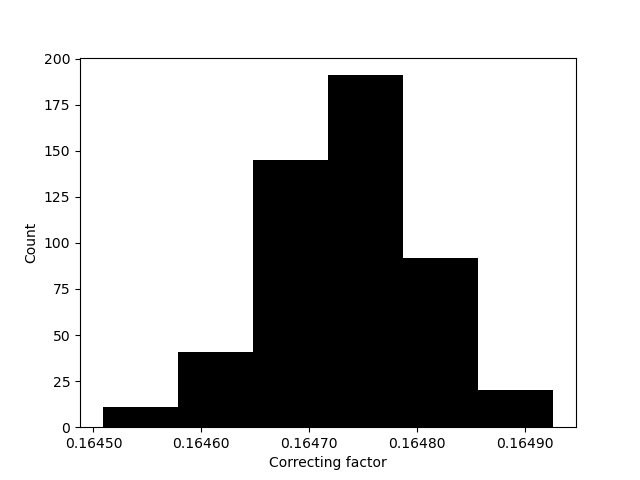}
      \caption{Mean = 0.1}
    \end{subfigure}
    \begin{subfigure}[t]{0.3\textwidth}
      \centering
      \includegraphics[width=1\textwidth]{./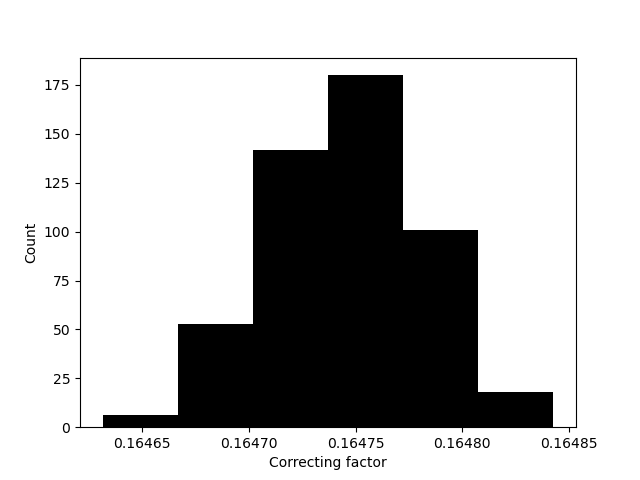}
      \caption{Mean = 0.3}
    \end{subfigure}
    \begin{subfigure}[t]{0.3\textwidth}
      \centering
      \includegraphics[width=1\textwidth]{./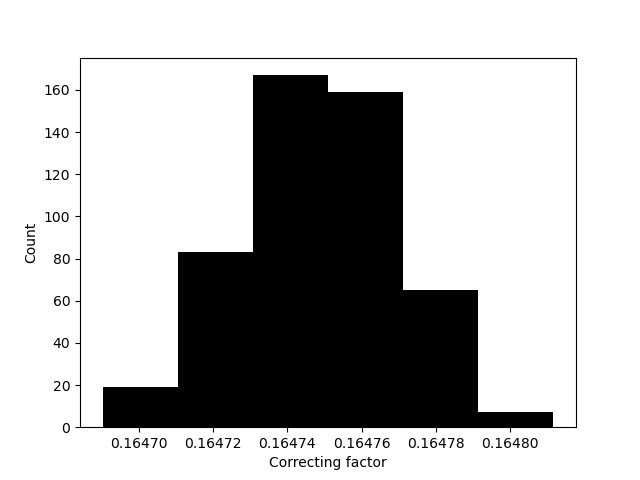}
      \caption{Mean = 0.5}
    \end{subfigure}
    \begin{subfigure}[t]{0.3\textwidth}
      \centering
      \includegraphics[width=1\textwidth]{./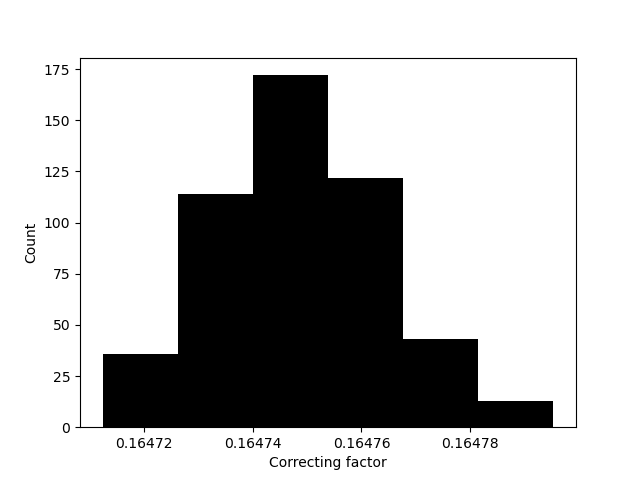}
      \caption{Mean = 0.7}
    \end{subfigure}
    \begin{subfigure}[t]{0.3\textwidth}
      \centering
      \includegraphics[width=1\textwidth]{./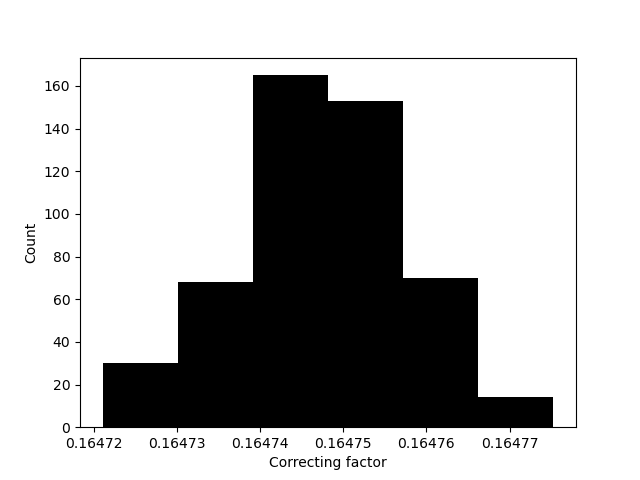}
      \caption{Mean = 0.9}
    \end{subfigure}
    \caption{Histograms of the ratio between the exact compliance standard deviation and the estimate using 10 Hadamard basis probing vectors. In each figure, 500 designs were randomly sampled where each element's pseudo-density is sampled from a truncated normal distribution with the means indicated above and a standard deviation of 0.2, truncated between 0 and 1. A value of $R = 10$ was used here.}
    \label{fig:correcting_std}
  \end{figure*}

\subsection{Mathematical analysis} \label{sec:analysis}

  In this section, an attempt will be made to mathematically explain the insensitivity of the estimators' correcting ratios to the design as shown above. While this section doesn't provide a rigorous proof of the phenomena observed, it does provide some mathematical insight into why it is happening and when it can be expected to happen in other problems.

  Let the diagonal estimator be:
  \begin{align}
  \frac{1}{N} \sum_{k = 1}^N \bm{D}_{\bm{v}_k} \bm{A} \bm{v}_k = 
  \frac{1}{N} \Bigl(\sum_{k = 1}^N \bm{D}_{\bm{v}_k} \bm{A} \bm{D}_{\bm{v}_k} \Bigr) \bm{1}
  \end{align}
  where $\bm{A} = \bm{F}^T \bm{K}^{-1} \bm{F}$. Let $a_{ij}$ be $(i,j)^{th}$ element of $\bm{A}$. Let $v_{ki}$ be the $i^{th}$ element of $\bm{v}_k$. The $(i,j)^{th}$ element of $\sum_{k = 1}^N \bm{D}_{\bm{v}_k} \bm{A} \bm{D}_{\bm{v}_k}$ is therefore $a_{ij} \sum_{k=1}^N v_{ki}v_{kj}$. Let $N^+_{ij}$ be the number of times $v_{ki}v_{kj}$ is 1 and $N^-_{ij}$ be the number of times $v_{ki}v_{kj}$ is -1. In the case of Hadamard basis, if $N$ is the smallest power of 2 larger than or equal to the number of loads $L$, then:
  \begin{align}
    N^+_{ij} = N^-_{ij} = N/2 & \quad \textit{if} \quad i \neq j \\
    N^+_{ij} = N, N^-_{ij} = 0 & \quad \textit{if} \quad i = j
  \end{align}
  This means that the diagonal estimate will be exact in that case. Bekas et al. \cite{Bekas2007} showed that Hadamard basis work well for banded matrices and for matrices where off-diagonal values are decaying rapidly away from the diagonal. However in the case of load compliances, neither of those conditions apply. Therefore, as shown in the experiment above, the accuracy of the estimated diagonal is quite bad as obvious from the standard deviation of the estimate. Let the $i^{th}$ diagonal element (or load compliance) be $C_i = a_{ii}$. The estimator of $a_{ii}$, $\hat{a}_{ii}$, can be written as:
  \begin{align}
    \hat{a}_{ii} & = \frac{1}{N} \sum_{j=1}^L a_{ij} \sum_{k=1}^N v_{ki}v_{kj} \\
    & = \frac{1}{N} \sum_{j=1}^L a_{ij} (N^+_{ij} - N^-_{ij}) \\
    & = a_{ii} + \sum_{j \neq i} a_{ij} \frac{N^+_{ij} - N^-_{ij}}{N}
  \end{align}
  The ratio of the estimated diagonal element to the actual diagonal element is:
  \begin{align}
    \frac{\hat{a}_{ii}}{a_{ii}} = 1 + \sum_{j \neq i} \frac{a_{ij}}{a_{ii}} \frac{N^+_{ij} - N^-_{ij}}{N}
  \end{align}
  This ratio depends on:
  \begin{enumerate}
    \item $\frac{a_{ij}}{a_{ii}}$ which depends on the design and the load scenarios, and
    \item $N^+_{ij}$ and $N^-_{ij}$ which depend on the Hadamard basis used. 
  \end{enumerate}
  If the same basis are used for all the designs during optimization, then $\frac{a_{ij}}{a_{ii}}$ is the only number that can vary.
  \begin{align}
    \frac{a_{ij}}{a_{ii}} = \frac{\bm{f}_i^T \bm{K}^{-1} \bm{f}_j}{\bm{f}_i^T \bm{K}^{-1} \bm{f}_i}
  \end{align}
  The partial derivative of $a_{ij} / a_{ii}$ with respect to the $e^{th}$ element's density $\rho_e$ is:
  \begin{align}
    \frac{\partial (a_{ij} / a_{ii})}{\partial \rho_e} & = \frac{\partial (\bm{f}_i^T \bm{K}^{-1} \bm{f}_j / \bm{f}_i^T \bm{K}^{-1} \bm{f}_i)}{\partial \rho_e} \\
    & = \frac{\frac{\partial \bm{f}_i^T \bm{K}^{-1} \bm{f}_j}{\partial \rho_e} \bm{f}_i^T \bm{K}^{-1} \bm{f}_i - \frac{\partial \bm{f}_i^T \bm{K}^{-1} \bm{f}_i}{\partial \rho_e} \bm{f}_i^T \bm{K}^{-1} \bm{f}_j}{(\bm{f}_i^T \bm{K}^{-1} \bm{f}_i)^2} \\
    & = \frac{-(\bm{u}_i^T \bm{K}_e \bm{u}_j) (\bm{u}_i^T \bm{K} \bm{u}_i) + (\bm{u}_i^T \bm{K}_e \bm{u}_i) (\bm{u}_i^T \bm{K} \bm{u}_j)}{(\bm{u}_i^T \bm{K} \bm{u}_i)^2} \\
    %& = \frac{1}{2} \frac{(\bm{u}_i + \bm{u}_j)^T \bm{K}_e (\bm{u}_i + \bm{u}_j) - \bm{u}_i^T \bm{K}_e \bm{u}_i - \bm{u}_j^T \bm{K}_e \bm{u}_j}{\bm{u}_i^T \bm{K} \bm{u}_i} + \\
    %& \frac{\bm{u}_i^T \bm{K}_e \bm{u}_i}{2\bm{u}_i^T \bm{K} \bm{u}_i} \frac{(\bm{u}_i + \bm{u}_j)^T \bm{K} (\bm{u}_i + \bm{u}_j) - \bm{u}_i^T \bm{K} \bm{u}_i - \bm{u}_j^T \bm{K} \bm{u}_j}{\bm{u}_i^T \bm{K} \bm{u}_i}
  \end{align}

  \begin{lemma}
    \begin{multline}
      |\bm{u}_i^T \bm{K} \bm{u}_j| \leq \\ 
      \frac{1}{2} \max(|(\bm{u}_i + \bm{u}_j)^T \bm{K} (\bm{u}_i + \bm{u}_j)|, |\bm{u}_i^T \bm{K} \bm{u}_i + \bm{u}_j^T \bm{K} \bm{u}_j|)
    \end{multline}
    if $\bm{K}$ is positive or negative semi-definite and
    \begin{multline}
      |\bm{u}_i^T \bm{K} \bm{u}_j| \leq \\ 
      \frac{1}{2} \Big( |(\bm{u}_i + \bm{u}_j)^T \bm{K} (\bm{u}_i + \bm{u}_j)| + |\bm{u}_i^T \bm{K} \bm{u}_i| + |\bm{u}_j^T \bm{K} \bm{u}_j)| \Big)
    \end{multline}
    otherwise.
    \label{lemma:bound1}
  \end{lemma}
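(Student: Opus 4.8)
The plan is to reduce the bilinear form $\bm{u}_i^T \bm{K} \bm{u}_j$ to the three quadratic forms appearing on the right-hand side by means of the polarization identity. Since $\bm{K}$ is a stiffness matrix it is symmetric, so $\bm{u}_i^T \bm{K} \bm{u}_j = \bm{u}_j^T \bm{K} \bm{u}_i$, and expanding $(\bm{u}_i + \bm{u}_j)^T \bm{K} (\bm{u}_i + \bm{u}_j)$ yields
\begin{align}
  2\,\bm{u}_i^T \bm{K} \bm{u}_j = (\bm{u}_i + \bm{u}_j)^T \bm{K} (\bm{u}_i + \bm{u}_j) - \bigl(\bm{u}_i^T \bm{K} \bm{u}_i + \bm{u}_j^T \bm{K} \bm{u}_j\bigr).
\end{align}
Writing $P = (\bm{u}_i + \bm{u}_j)^T \bm{K} (\bm{u}_i + \bm{u}_j)$ and $Q = \bm{u}_i^T \bm{K} \bm{u}_i + \bm{u}_j^T \bm{K} \bm{u}_j$, the entire lemma reduces to bounding $|P - Q|$, and the two cases differ only in which estimate of this difference is available.

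For the first inequality I would invoke the hypothesis that $\bm{K}$ is positive (respectively negative) semi-definite. In that case every quadratic form $\bm{w}^T \bm{K} \bm{w}$ is nonnegative (respectively nonpositive), so $P$ and $Q$ necessarily carry the same sign. The key elementary fact is that for two reals $a,b$ of the same sign one has $|a - b| \le \max(|a|, |b|)$; applying it with $a = P$ and $b = Q$ gives $2|\bm{u}_i^T \bm{K} \bm{u}_j| = |P - Q| \le \max(|P|, |Q|)$, which is precisely the claimed bound after dividing by $2$.

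For the second inequality, when $\bm{K}$ is indefinite the sign argument is unavailable, so I would instead apply the triangle inequality $|P - Q| \le |P| + |Q|$ and then split $|Q| = |\bm{u}_i^T \bm{K} \bm{u}_i + \bm{u}_j^T \bm{K} \bm{u}_j| \le |\bm{u}_i^T \bm{K} \bm{u}_i| + |\bm{u}_j^T \bm{K} \bm{u}_j|$ by a second triangle inequality; dividing by $2$ produces the stated form. The argument is essentially elementary, and I do not anticipate any genuine obstacle beyond carefully bookkeeping the two cases. The one conceptual point worth highlighting is that semi-definiteness is exactly what forces $P$ and $Q$ to share a sign, and it is this shared sign that permits replacing the coarse bound $|P| + |Q|$ of the general case by the sharper $\max(|P|, |Q|)$.
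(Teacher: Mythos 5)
Your proposal is correct and follows essentially the same route as the paper: the polarization identity $2\,\bm{u}_i^T \bm{K} \bm{u}_j = P - Q$, the triangle inequality in the indefinite case, and the observation that semi-definiteness forces $P$ and $Q$ to share a sign so that $|P-Q| \le \max(|P|,|Q|)$ (the paper phrases this last step as the sandwich $-Q \le P - Q \le P$, which is the same fact). Your explicit appeal to the symmetry of $\bm{K}$ is a small point of extra care that the paper leaves implicit.
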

  \begin{proof}
    \begin{align}
      2 \bm{u}_i^T \bm{K} \bm{u}_j = (\bm{u}_i + \bm{u}_j)^T \bm{K} (\bm{u}_i + \bm{u}_j) - \bm{u}_i^T \bm{K} \bm{u}_i - \bm{u}_j^T \bm{K} \bm{u}_j
    \end{align}
    If $\bm{K}$ is indefinite:
    \begin{multline}
      2 |\bm{u}_i^T \bm{K} \bm{u}_j| \leq \\ |(\bm{u}_i + \bm{u}_j)^T \bm{K} (\bm{u}_i + \bm{u}_j)| + |\bm{u}_i^T \bm{K} \bm{u}_i| + |\bm{u}_j^T \bm{K} \bm{u}_j|
    \end{multline}
    If $\bm{K}$ is positive semi-definite, then $(\bm{u}_i + \bm{u}_j)^T \bm{K} (\bm{u}_i + \bm{u}_j)$, $\bm{u}_i^T \bm{K} \bm{u}_i$ and $\bm{u}_j^T \bm{K} \bm{u}_j$ are all non-negative. Therefore:
    \begin{multline}
      -(\bm{u}_i^T \bm{K} \bm{u}_i + \bm{u}_j^T \bm{K} \bm{u}_j) \leq \\ 2 \bm{u}_i^T \bm{K} \bm{u}_j \leq (\bm{u}_i + \bm{u}_j)^T \bm{K} (\bm{u}_i + \bm{u}_j)
    \end{multline}
    Similarly, if $\bm{K}$ is negative semi-definite, then \\ $(\bm{u}_i + \bm{u}_j)^T \bm{K} (\bm{u}_i + \bm{u}_j)$, $\bm{u}_i^T \bm{K} \bm{u}_i$ and $\bm{u}_j^T \bm{K} \bm{u}_j$ are all non-positive. Therefore:
    \begin{multline}
      (\bm{u}_i + \bm{u}_j)^T \bm{K} (\bm{u}_i + \bm{u}_j) \leq 2 \bm{u}_i^T \bm{K} \bm{u}_j \leq \\ -(\bm{u}_i^T \bm{K} \bm{u}_i + \bm{u}_j^T \bm{K} \bm{u}_j)
    \end{multline}
    It follows that:
    \begin{multline}
      2 |\bm{u}_i^T \bm{K} \bm{u}_j| \leq \\ \max(|(\bm{u}_i + \bm{u}_j)^T \bm{K} (\bm{u}_i + \bm{u}_j)|, |\bm{u}_i^T \bm{K} \bm{u}_i + \bm{u}_j^T \bm{K} \bm{u}_j|)
    \end{multline}
    This completes the proof.
  \end{proof}
  Using the above bound, it follows that if for all combinations of $i$ and $j$:
  \begin{align}
    \frac{\bm{u}_j^T \bm{K}_e \bm{u}_j}{\bm{u}_i^T \bm{K} \bm{u}_i} \leq \alpha_1 \\
    \frac{(\bm{u}_i + \bm{u}_j)^T \bm{K}_e (\bm{u}_i + \bm{u}_j)}{2 \bm{u}_i^T \bm{K} \bm{u}_i} \leq \beta_1 \\
    \frac{\bm{u}_j^T \bm{K} \bm{u}_j}{\bm{u}_i^T \bm{K} \bm{u}_i} \leq \alpha_2\\
    \frac{(\bm{u}_i + \bm{u}_j)^T \bm{K} (\bm{u}_i + \bm{u}_j)}{2 \bm{u}_i^T \bm{K} \bm{u}_i} \leq \beta_2
  \end{align}
  then
  \begin{align}
    \Bigl|\frac{\partial (a_{ij} / a_{ii})}{\partial \rho_e}\Bigr| & \leq \max(\alpha_1, \beta_1) + \alpha_1 \times \max(\alpha_2, \beta_2)
  \end{align}
  It is natural to expect $\alpha_1$ to be small since the element compliance due to any one load will likely be much smaller than the total compliance due to any other load. If the loading scenarios have widely varying magnitudes, $\alpha_1$ may be large in that case. However to remedy this, the loading scenarios can be clustered into groups by their norm and a separate estimator can be used for each group. Similarly, $\beta_1$ is likely to be small if all the forces have a close enough norm since the element compliance due to the superposition of 2 loads is likely to be much smaller than two times the total compliance due to any other load. If the norms of the loads are somewhat similar, $\alpha_2$ and $\beta_2$ can also be expected to be small constants greater than or equal to 1. This means that absolute value of the individual partial derivatives can be upper bounded by a small positive number. Interestingly, the sum of all the partial derivatives of the correcting ratio with respect to the individual element densities, $\sum_e \frac{\partial (a_{ij} / a_{ii})}{\partial \rho_e}$, is 0. This does not guarantee that the directional derivative in any direction will be small but it increases the chances of term cancellation. This is consistent with the observations.

  However, the correcting factor for the estimator $\hat{C}_i = \hat{a}_{ii}$ does not just depend on the individual $\frac{\partial (a_{ij} / a_{ii})}{\partial \rho_e}$ but rather it depends on the sum $\sum_{j \neq i} \frac{a_{ij}}{a_{ii}} \frac{N^+_{ij} - N^-_{ij}}{N}$. Three factors can make this sum small:
  \begin{enumerate}
    \item A good choice of probing vectors that make the distribution of $N^+_{ij} - N^-_{ij}$ for different $(i,j)$ pairs symmetric around 0 promoting term cancellation.
    \item Term cancellation due to the alternating signs of $a_{ij}$. For instance if the mean load vector is the $\bm{0}$ vector, the summation $\sum_{j \neq i} \frac{a_{ij}}{a_{ii}}$ is equal to -1 regardless of the number of loading scenarios.
    \item A small ratio of the number of loading scenarios to the number of elements. This is detailed below.
  \end{enumerate}
  For fixed loading scenarios, the values of $\alpha_1$ and $\beta_1$ decrease as the number of elements $E$ increases. This is because the ratio of an individual element's contribution to the total strain energy decreases as the element size decreases. Given that $-1 \leq \frac{N^+_{ij} - N^-_{ij}}{N} \leq 1$:
  \begin{align}
    \Biggl|\sum_{j \neq i} \frac{a_{ij}}{a_{ii}} \frac{N^+_{ij} - N^-_{ij}}{N} \Biggr| \leq (L - 1)(\beta_1 + \alpha_1 + \alpha_1 (\beta_2 + \alpha_2))
  \end{align}
  Therefore, if $L \ll E$ and the loads in $\bm{F}$ have close magnitudes, one can expect the correcting factor to be design insensitive especially near the end of the optimization when the design is not changing much. 

  The analysis above identified 3 strategies other than using more probing vectors that can help promote the insensitivity of the correcting factors to the design:
  \begin{enumerate}
    \item Clustering the loads by their magnitudes with a maximum number of loads per cluster $\ll E$,
    \item Centering the loads around $\bm{0}$. Let $\bm{\mu}_{\bm{f}}$ be the sample mean of the loading scenarios and let $\tilde{\bm{f}}_i = \bm{f}_i - \bm{\mu}_{\bm{f}}$. The $i^{th}$ load compliance $\bm{f}_i^T \bm{K}^{-1} \bm{f}_i$ would then be $\tilde{\bm{f}}_i^T \bm{K}^{-1} \tilde{\bm{f}}_i + 2 \tilde{\bm{f}}_i^T \bm{K}^{-1} \bm{\mu}_{\bm{f}} + \bm{\mu}_{\bm{f}}^T \bm{K}^{-1} \bm{\mu}_{\bm{f}}$. The terms $\tilde{\bm{f}}_i^T \bm{K}^{-1} \tilde{\bm{f}}_i$ can be obtained from the diagonal estimator of $\tilde{\bm{F}}^T \bm{K}^{-1} \tilde{\bm{F}}$ where the columns of $\tilde{\bm{F}}$ are the vectors $\tilde{\bm{f}}_i$. The remaining terms can be computed using a single linear system solve $\bm{K}^{-1} \bm{\mu}_{\bm{f}}$.
    \item Using a finer mesh, i.e. increasing $E$ thus decreasing $\alpha_1$ and $\beta_1$.
  \end{enumerate}

  Note that while the analysis above provides some mathematical insights into why the correcting ratios for the individual compliances may not be sensitive to the design, it is not a complete proof of the phenomena observed because only a single element's $\rho_e$ was assumed to be changing in the analysis. However, from the analysis above one can see that term cancellation is highly likely in practice. For instance, the sum of $\frac{a_{ij}}{a_{ii}}$ for all $j$ is equal to -1 if the mean load is $\bm{0}$ regardless of the number of loads, and the sum of $\frac{\partial a_{ij}/a_{ii}}{\partial \rho_e}$ for all $e$ is equal to 0. This term cancellation is the main reason behind the extreme insensitivity of the correcting ratio to the design observed in the experiments above even when all the elements' densities are changing in random directions by large amounts.

  Next it will be shown that under some conditions that the above insensitivity of the correcting ratio to any individual $\rho_e$ can be extended to a class of scalar-valued functions of the load compliances. This class of functions includes the mean, variance and standard deviation but not the augmented Lagrangian penalty. Let $\gamma_i$ be the correcting factor for the compliance $C_i$. The correcting factor of a scalar valued function $f$ of the load compliances can therefore be written as:
  \begin{align}
    \eta(\bm{\rho}) = \frac{f(\gamma_1(\bm{\rho}) \hat{C}_1(\bm{\rho}), \gamma_2(\bm{\rho}) \hat{C}_2(\bm{\rho}), \dots, \gamma_L(\bm{\rho}) \hat{C}_L(\bm{\rho}))}{f(\hat{C}_1(\bm{\rho}), \hat{C}_2(\bm{\rho}), \dots, \hat{C}_L(\bm{\rho}))}
  \end{align}
  Let $f_{\bm{\hat{C}}} = f(\hat{C}_1, \dots, \hat{C}_L)$ and $f_{\bm{C}} = f(\gamma_1 \hat{C}_1, \dots, \gamma_L \hat{C}_L)$. Furthermore, let $f_{\bm{\hat{C}}}^{(i)}$ be the partial derivative of $f$ with respect to its $i^{th}$ argument evaluated at $(\hat{C}_1, \hat{C}_2, \dots, \hat{C}_L)$ and let $f_{\bm{C}}^{(i)}$ be the partial derivative of $f$ with respect to its $i^{th}$ argument evaluated at $(\gamma_1 \hat{C}_1, \gamma_2 \hat{C}_2, \dots, \gamma_L \hat{C}_L)$.
  \begin{align}
    \frac{\partial \eta}{\partial \rho_e} & 
    = \sum_i \Biggl(\frac{\partial \eta}{\partial \gamma_i} * \frac{\partial \gamma_i}{\partial \rho_e} + 
    \frac{\partial \eta}{\partial \hat{C}_i} * \frac{\partial \hat{C}_i}{\partial \rho_e}\Biggr) \\
    & = \sum_i \Biggl( \frac{f_{\bm{C}}^{(i)} \hat{C}_i}{f_{\hat{\bm{C}}}} \frac{\partial \gamma_i}{\partial \rho_e} + \Biggl( \frac{f_{\bm{C}}^{(i)} \gamma_i}{f_{\bm{\hat{C}}}} - \frac{f_{\bm{\hat{C}}}^{(i)} f_{\bm{C}}}{f_{\hat{\bm{C}}}^2} \Biggr) \frac{\partial \hat{C}_i}{\partial \rho_e} \Biggr)
  \end{align}
  One can see that if the magnitudes of $f_{\hat{\bm{C}}}^{(i)}$ and $f_{\bm{C}}^{(i)}$ scale down as $L$ increases and if $f_{\bm{C}} / f_{\hat{\bm{C}}}^2$ is small that the partial derivative $\frac{\partial \eta}{\partial \rho_e}$ will also likely be small. For all $i$, let:
  \begin{align}
    |f_{\bm{C}}^{(i)}| \leq \frac{c_1}{L} \\
    |f_{\hat{\bm{C}}}^{(i)}| \leq \frac{c_1}{L} \\
    \Bigl| \frac{\partial \hat{C}_i}{\partial \rho_e} \Bigr| \leq c_2 \\
    \Bigl| \frac{\partial \gamma_i}{\partial \rho_e} \Bigr| \leq c_3 \\
    \Bigl| \frac{\hat{C}_i}{f_{\hat{\bm{C}}}} \Bigr| \leq c_4 \\
    \Bigl| \frac{\gamma_i}{f_{\hat{\bm{C}}}} \Bigr| \leq c_5 \\
    \Bigl| \frac{f_{\bm{C}}}{f_{\hat{\bm{C}}}^2} \Bigr| \leq c_6
  \end{align}
  Then one can set the bound:
  \begin{align}
    \Biggl| \frac{\partial \eta}{\partial \rho_e} \Biggr| \leq c_1 c_4 c_3 + c_1 c_5 c_2 + c_1 c_6 c_2
  \end{align}
  From the above bound, one can see that $c_3$, $c_5$ and $c_6$ must be small enough to guarantee a low upper bound on the absolute value of $\frac{\partial \eta}{\partial \rho_e}$. This means that:
  \begin{enumerate}
    \item The diagonal's correcting factors must not be sensitive to $\rho_e$ (i.e. $c_3$ is small). This has been established above under some conditions.
    \item The ratio of the diagonal correcting factors to the function estimator $f_{\hat{\bm{C}}}$ must be small in magnitude, i.e. ($c_5$ is small). This is true for the experiment above, where the diagonal correcting ratios at the full ground mesh ranged from -19.0 to 21.7 while the estimated compliance mean and standard deviation were 410.0 and 1329.7 respectively.
    \item The ratio $f_{\bm{C}} / f_{\hat{\bm{C}}}^2$ must be small in magnitude, (i.e. $c_6$ is small). This is also true for the experiment above at the full ground mesh where the ratios were $2.7e-3$ and $3.0e-4$ for the mean and standard deviation of the compliance respectively.
  \end{enumerate}

  To show that the above result applies to the mean, standard deviation and variance functions, it suffices to show that $|f_{\bm{C}}^{(i)}| \leq \frac{c_1}{L}$ for some constant $c_1$. If this is true for $f_{\bm{C}}^{(i)}$ then it is also true for $f_{\hat{\bm{C}}}^{(i)}$ since this is the same function evaluated at different points. The partial derivatives of the mean, standard deviation and variance of $(C_1, C_2, \dots, C_L)$ with respect to each $C_i$ are:
  \begin{align}
      \frac{\partial \mu_C}{\partial C_i} = \frac{1}{L}
  \end{align}
  \begin{align}
      \frac{\partial \sigma_C}{\partial C_i} = \Big( 1 - \frac{1}{L} \Big) \frac{C_i - \mu_C}{(L - 1) \times \sigma_C} \leq \frac{2(C_i - \mu_C)}{L \times \sigma_C}
  \end{align}
  \begin{align}
      \frac{\partial \sigma_C^2}{\partial C_i} = \Big( 1 - \frac{1}{L} \Big) \frac{2(C_i - \mu_C)}{(L - 1)} \leq \frac{4(C_i - \mu_C)}{L}
  \end{align}
  because $L - 1 \geq L / 2$ for $L > 1$. Let $l_{\mu_C}$ and $l_{\sigma_C}$ be lower bounds on $\mu_C$ and $\sigma_C$ for all the designs. The constant $c_1$ is therefore $1$ for $\mu_C$, $2(C_{max} - l_{\mu_C})/l_{\sigma_C}$ for $\sigma_C$ and $4(C_{max} - l_{\mu_C})$ for $\sigma_C^2$.

  Finally for the augmented Lagrangian function, it was not possible to establish the bound above. Even if the compliance constraints were scaled by $1/L$ allowing a bound of the form $c_1 / L$, $c_1$ would still scale up with the linear and quadratic penalties of the augmented Lagrangian function. The linear penalty is unbounded from above and the quadratic penalty grows exponentially during the optimization process. This means that no tight bound can be established. The experiments run were also consistent with this result where the diagonal estimation method was found to not work when solving a maximum compliance constrained problem using the augmented Lagrangian algorithm. A meaningless design was produced.

\section{Optimization}

  \subsection{Low rank loads}

    When minimizing the mean compliance only, the insensitivity of the correcting ratio to the design implies that one can minimize the mean compliance estimate instead of the exact one and get a reasonable design. This will be demonstrated in this section. In this section, a rank $R = 10$ is used and the trace estimation method is compared against the naive exact method where all the loading scenarios are enumerated. When minimizing the weighted sum of the mean and standard deviation of the compliance, a corrected estimator was used by calculating the correcting ratio of the mean and standard deviation estimators separately at the full ground mesh. Let the uncorrected mean compliance estimator be $\hat{\mu}_C$ and the uncorrected standard deviation estimator be $\hat{\sigma}_C$. The corrected estimator, $\hat{W}$, of the weighted sum of the mean and standard deviation used was:
    \begin{align} \label{eqn:corrected_estimator}
     \hat{W} = \frac{\mu_C(\bm{x}_0)}{\hat{\mu}_C(\bm{x}_0)} \hat{\mu}_C(\bm{x}) + 2 \frac{\sigma_C(\bm{x}_0)}{\hat{\sigma}_C(\bm{x}_0)} \hat{\sigma}_C(\bm{x})
    \end{align}
    Only Hadamard basis probing vectors were used in this section.

  \subsubsection{Mean compliance minimization}

    To demonstrate the effectiveness of the proposed approaches, the cantilever beam problem described in section \ref{sec:exp_impl} was solved using the proposed exact and approximate methods. Table \ref{tab:mean_stats} shows the statistics of the final optimal solutions obtained by minimizing the mean compliance subject to the volume fraction constraint using exact and trace estimation methods to evaluate the mean compliance. 10 Hadamard basis probing vectors were used in the trace estimator. The optimal topologies are shown in Figure \ref{fig:mean}.

    While the designs obtained were different, both algorithms converged to reasonable designs in similar amounts of time. The convergence time shows that the convergence behavior was not affected by the use of an estimator in place of the original objective. However, the design produced by the trace estimation method was significantly worse than the exact method's which is to be expected since an approximate objective was minimized. Finally, note that the correcting ratio of the mean compliance estimator at the final design is 1.276 which is very close to the values shown in Figure \ref{fig:correcting_mean}.

    \begin{figure*}
      \begin{subfigure}[t]{0.45\textwidth}
        \centering
        \includegraphics[width=1\textwidth]{./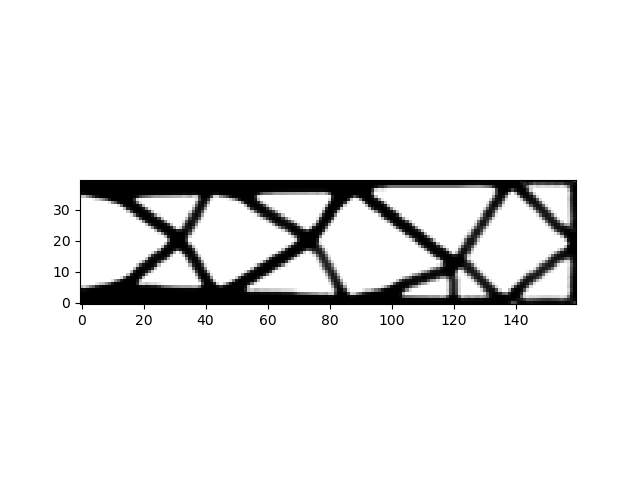}
        \caption{Exact method.}
      \end{subfigure} \hfill
      \begin{subfigure}[t]{0.45\textwidth}
        \centering
        \includegraphics[width=1\textwidth]{./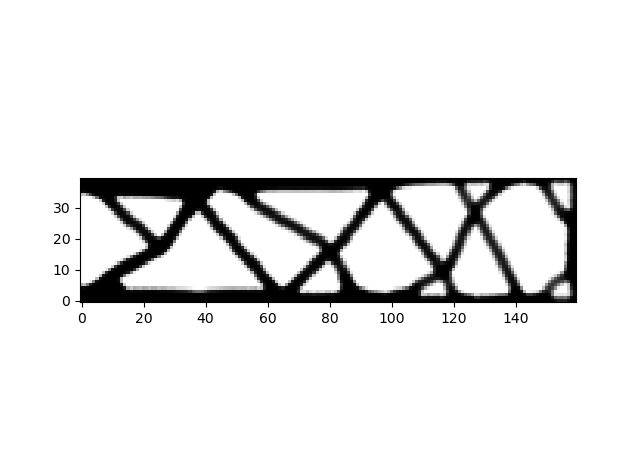}
        \caption{Trace estimation method with 10 Hadamard basis probing vectors.}
      \end{subfigure}
      \caption{Optimal topologies of the mean compliance minimization problem using continuation SIMP.}
      \label{fig:mean}
    \end{figure*}

    \begin{table*}
     \centering
     \caption{Summary statistics of the load compliances of the optimal solution of the mean compliance minimization problem using the exact and trace estimation methods to evaluate the mean compliance. 10 Hadamard basis probing vectors were used in the trace estimator.}
     \begin{tabular}{|c|c|c|}
      %1.1 - 0.7 - 0.5
      \hline
      \multirow{2}{5em}{Compliance Stat} & \multicolumn{2}{c|}{Value} \\\cline{2-3}
      & Exact & Trace estimation \\
      \hline
      \hline
      $\mu_C$ (Nmm) & 9397.4 & 7534.1 (uncorrected approx) / 9563.4 (exact) \\
      \hline
      $\sigma_C$ (Nmm) & 9689.0 & 9698.8 \\
      \hline
      $C_{max}$ (Nmm) & 125440.7 & 124021.8 \\
      \hline
      $C_{min}$ (Nmm) & 468.7 & 380.8 \\
      \hline
      $V$ & 0.400 & 0.400 \\
      \hline
      $Time$ (s) & 25505.4 & 375.5 \\
      \hline
     \end{tabular}
     \label{tab:mean_stats}
    \end{table*}

  \subsubsection{Mean-std compliance minimization}

    Similarly, Table \ref{tab:mean_std_stats} shows the statistics of the final solutions of the mean-std minimization problem solved using the exact and the corrected diagonal estimator method with 10 Hadamard basis probing vectors. The optimal topologies are shown in Figure \ref{fig:mean_std}. Both algorithms converged to reasonable, feasible designs. Additionally, as expected the exact and approximate mean-std minimization algorithms converged to solutions with lower compliance standard deviations but higher means compared to the exact and approximate mean minimization algorithms. It should be noted that the approximation error and non-convexity of the problem can sometimes lead this expectation to be unmet with the approximate approaches. The results indicate that the approximate method is able to converge in a fraction of the time it takes the exact method to converge because evaluating the function and its gradient using diagonal estimation requires $2N = 20$ linear system solves while the naive exact method requires 1000. This problem uses a low rank $\bm{F}$. The results of using the approximate methods proposed to solve a problem with a load matrix $\bm{F}$ of rank 100 are shown in the next section.

    \begin{figure*}
      \begin{subfigure}[t]{0.45\textwidth}
        \centering
        \includegraphics[width=1\textwidth]{./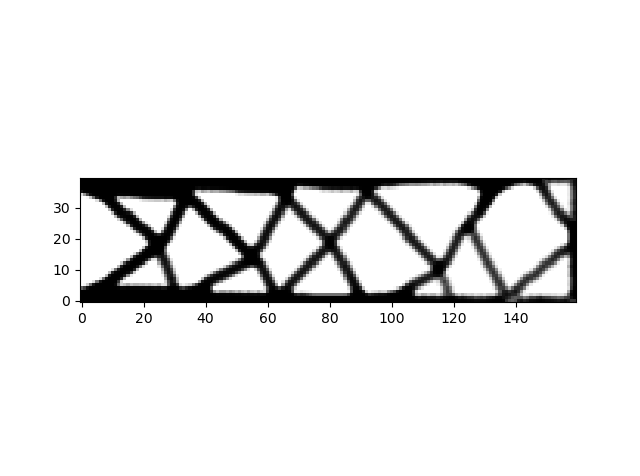}
        \caption{Exact method.}
      \end{subfigure} \hfill
      \begin{subfigure}[t]{0.45\textwidth}
        \centering
        \includegraphics[width=1\textwidth]{./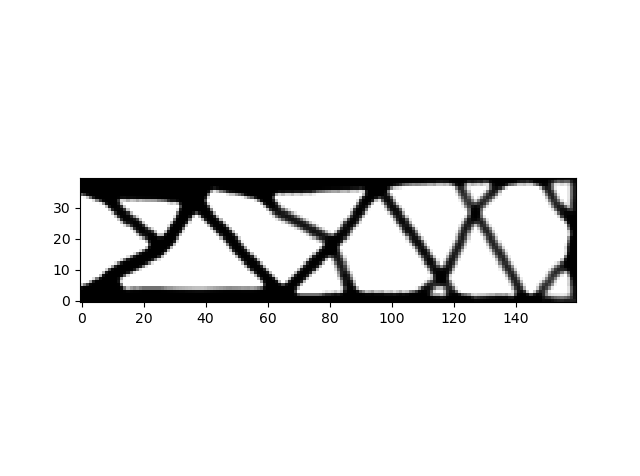}
        \caption{Corrected diagonal estimation method with 10 Hadamard basis probing vectors using the estimator in Eq. \ref{eqn:corrected_estimator}.}
      \end{subfigure}
      \caption{Optimal topologies of the mean-std compliance minimization problem using continuation SIMP.}
      \label{fig:mean_std}
    \end{figure*}

    \begin{table*}[!htbp]
     \centering
     \caption{Summary statistics of the load compliances of the optimal solution of the mean-std compliance minimization problem using exact and the corrected diagonal estimation method with 10 Hadamard basis probing vectors to evaluate the mean-std compliance.}
     \begin{tabular}{|c|c|c|}
      \hline
      \multirow{2}{5em}{Compliance Stat} & \multicolumn{2}{c|}{Value} \\\cline{2-3}
      & Exact & Diagonal estimation \\
      \hline
      \hline
      $\mu_C (Nmm) $ & 9871.6 & 9906.6 (corrected approx) / 9809.8 (exact) \\
      \hline
      $\sigma_C (Nmm) $ & 9264.0 & 9225.4 (corrected approx) / 9348.6 (exact) \\
      \hline
      $\mu_C + 2.0 \sigma_C (Nmm) $ & 28407.6 & 28357.4 (corrected approx) / 28513.4 (exact) \\
      \hline
      $C_{max}$ (Nmm) & 117956.4 & 118853.9 \\
      \hline
      $C_{min}$ (Nmm) & 530.2 & 451.9 \\
      \hline
      $V$ & 0.400 & 0.400 \\
      \hline
      Time (s) & 2821.1 & 540.0 \\
      \hline
     \end{tabular}
     \label{tab:mean_std_stats}
    \end{table*}

  \subsection{High rank loads}

    In this section, the 2D problems solved above will be solved using a load scenarios matrix $\bm{F}$ of rank $R = 100$ instead of 10. Additionally, the SVD-based method proposed by \cite{tarek2021robust} will be used instead of the naive approach used above. This will highlight the disadvantage of the SVD-based method when using a high rank $\bm{F}$.
  
    The results are shown to be consistent with the low rank $\bm{F}$ where the corrected estimator's accuracy is significantly improved by a single correction at the beginning of the optimization. The histograms in Figures \ref{fig:correcting_mean_high_rank} and \ref{fig:correcting_std_high_rank} also suggest that the correcting ratio is insensitive to the design. Figure \ref{fig:exact_approx_high_rank} shows that the Hadamard probing vectors do not always give a better estimator than the Rademacher-distributed one for the mean but it is consistently better for the standard deviation. Figures \ref{fig:mean_high_rank} and \ref{fig:mean_std_high_rank} and tables \ref{tab:mean_stats_high_rank} and \ref{tab:mean_std_stats_high_rank} show the optimal topologies and results obtained using the exact and approximate methods. The results are consistent with the expectations.
    \begin{figure*}
      \begin{subfigure}[t]{0.45\textwidth}
        \centering
        \includegraphics[width=1\textwidth]{./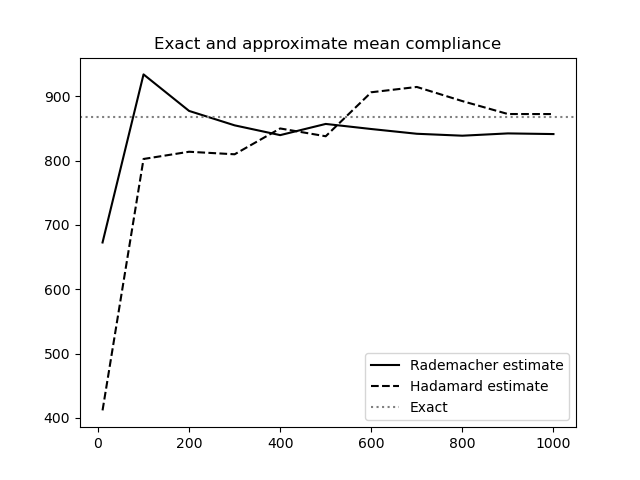}
        \caption{Mean compliance estimate using different numbers of probing vectors in the trace estimation method.}
        \label{fig:exact_approx_mean_high_rank}
      \end{subfigure} \hfill
      \begin{subfigure}[t]{0.45\textwidth}
        \centering
        \includegraphics[width=1\textwidth]{./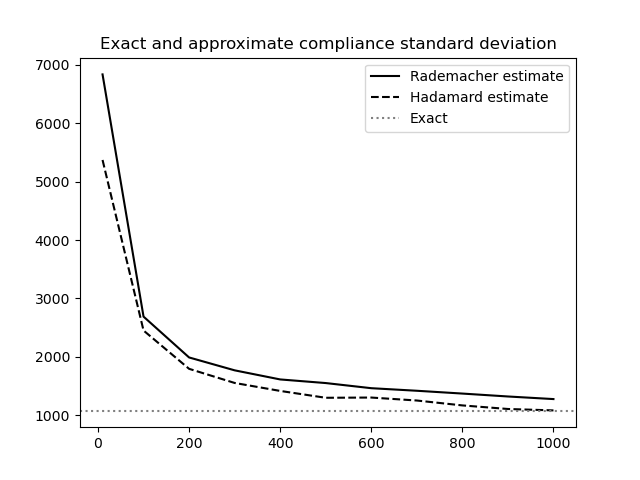}
        \caption{Compliance standard deviation estimate using different numbers of probing vectors in the diagonal estimation method.}
        \label{fig:exact_approx_std_high_rank}
      \end{subfigure}
      \caption{Accuracy profile of the trace and diagonal estimation methods for estimating the mean compliance and its standard deviation using 10, 100, 200, 300, 400, 500, 600, 700, 800, 900 and 1000 probing vectors for the high rank $\bm{F}$ case.}
      \label{fig:exact_approx_high_rank}
    \end{figure*}

    \begin{figure*}
      \begin{subfigure}[t]{0.3\textwidth}
        \centering
        \includegraphics[width=1\textwidth]{./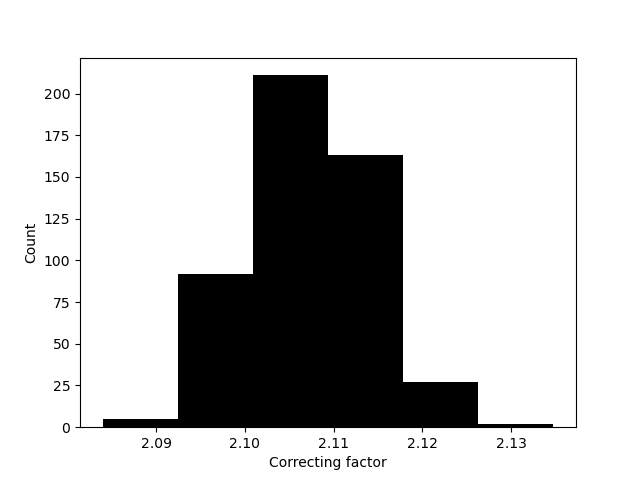}
        \caption{Mean = 0.1}
      \end{subfigure}
      \begin{subfigure}[t]{0.3\textwidth}
        \centering
        \includegraphics[width=1\textwidth]{./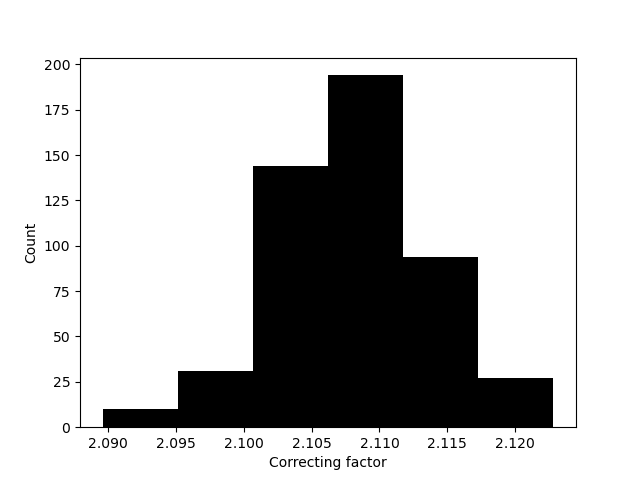}
        \caption{Mean = 0.3}
      \end{subfigure}
      \begin{subfigure}[t]{0.3\textwidth}
        \centering
        \includegraphics[width=1\textwidth]{./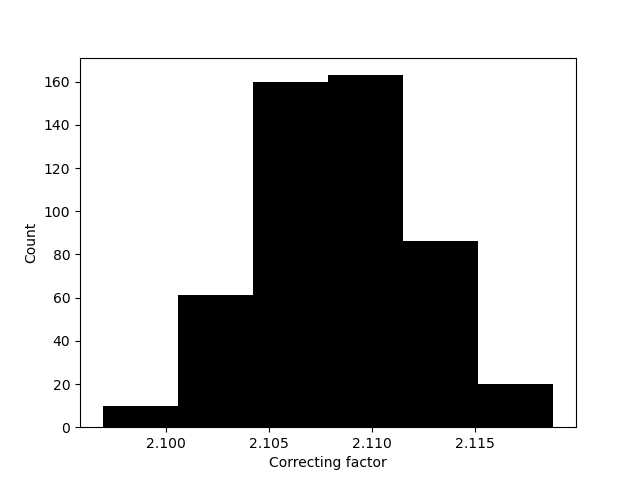}
        \caption{Mean = 0.5}
      \end{subfigure}
      \begin{subfigure}[t]{0.3\textwidth}
        \centering
        \includegraphics[width=1\textwidth]{./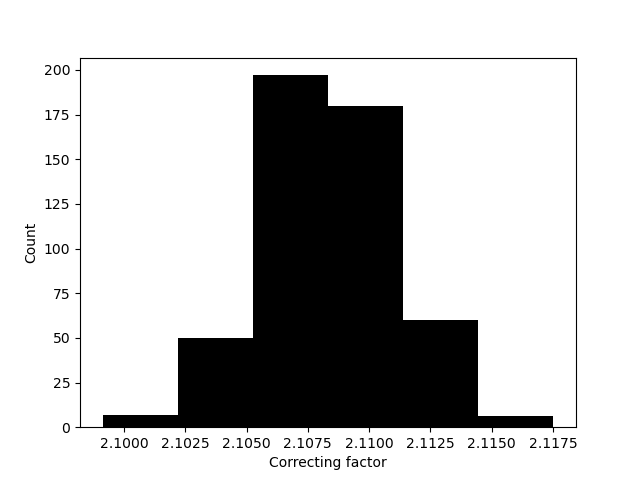}
        \caption{Mean = 0.7}
      \end{subfigure}
      \begin{subfigure}[t]{0.3\textwidth}
        \centering
        \includegraphics[width=1\textwidth]{./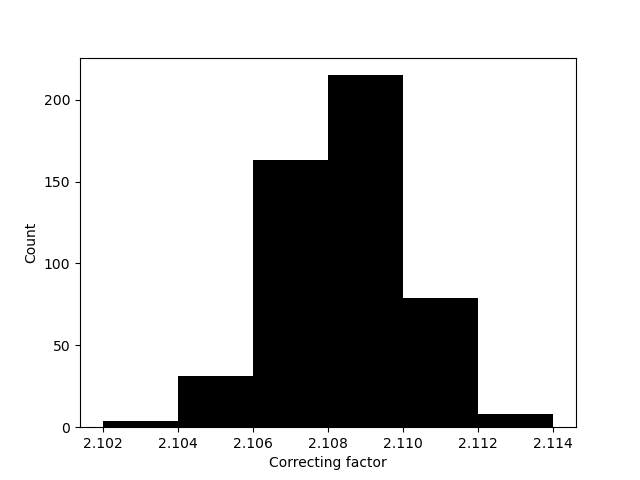}
        \caption{Mean = 0.9}
      \end{subfigure}
      \caption{Histograms of the ratio between the exact mean compliance and the trace estimate using 10 Hadamard basis probing vectors for the high rank $\bm{F}$. In each figure, 500 designs were randomly sampled where each element's pseudo-density is sampled from a truncated normal distribution with the means indicated above and a standard deviation of 0.2, truncated between 0 and 1.}
      \label{fig:correcting_mean_high_rank}
    \end{figure*}
    \begin{figure*}
      \begin{subfigure}[t]{0.3\textwidth}
        \centering
        \includegraphics[width=1\textwidth]{./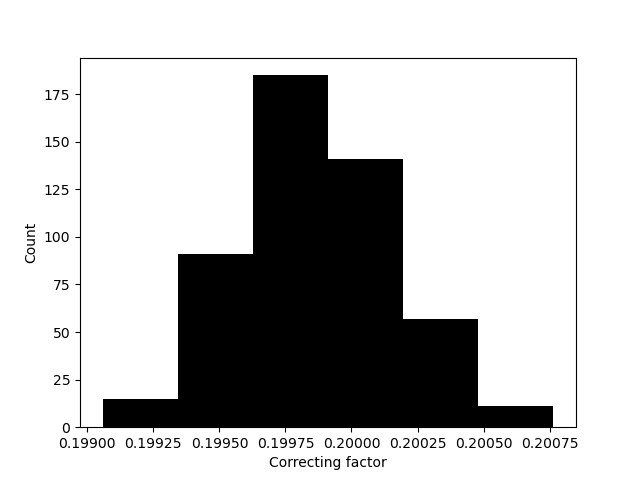}
        \caption{Mean = 0.1}
      \end{subfigure}
      \begin{subfigure}[t]{0.3\textwidth}
        \centering
        \includegraphics[width=1\textwidth]{./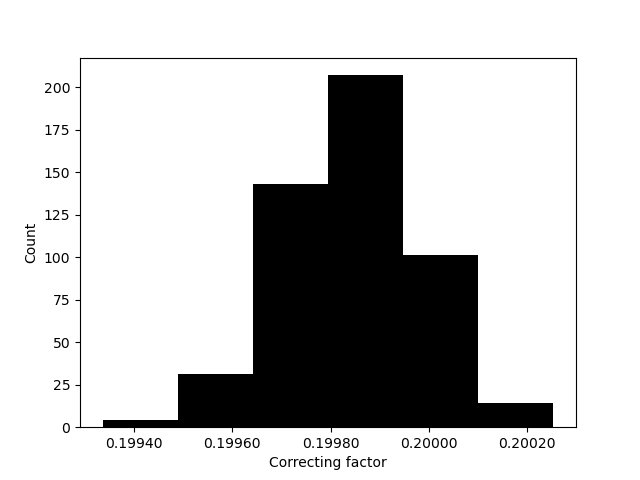}
        \caption{Mean = 0.3}
      \end{subfigure}
      \begin{subfigure}[t]{0.3\textwidth}
        \centering
        \includegraphics[width=1\textwidth]{./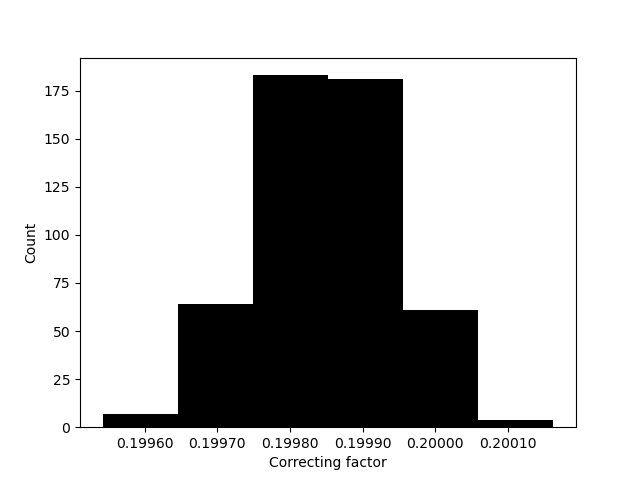}
        \caption{Mean = 0.5}
      \end{subfigure}
      \begin{subfigure}[t]{0.3\textwidth}
        \centering
        \includegraphics[width=1\textwidth]{./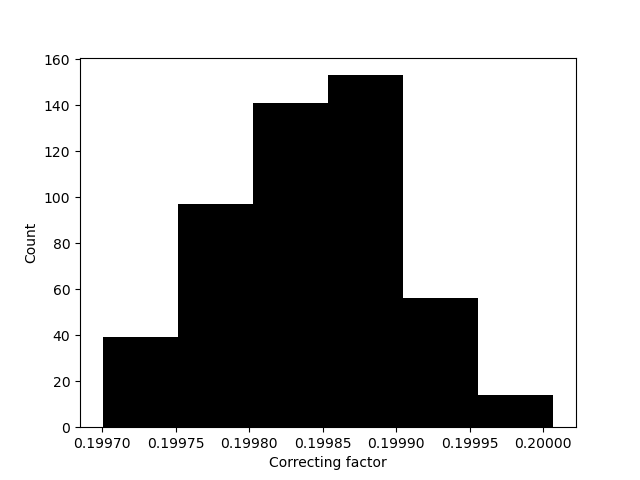}
        \caption{Mean = 0.7}
      \end{subfigure}
      \begin{subfigure}[t]{0.3\textwidth}
        \centering
        \includegraphics[width=1\textwidth]{./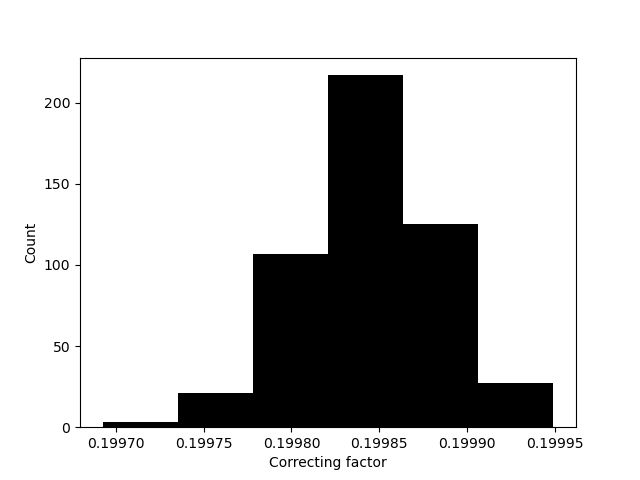}
        \caption{Mean = 0.9}
      \end{subfigure}
      \caption{Histograms of the ratio between the exact compliance standard deviation and the estimate using 10 Hadamard basis probing vectors for the high rank $\bm{F}$ case. In each figure, 500 designs were randomly sampled where each element's pseudo-density is sampled from a truncated normal distribution with the means indicated above and a standard deviation of 0.2, truncated between 0 and 1.}
      \label{fig:correcting_std_high_rank}
    \end{figure*}

    \begin{figure*}
      \begin{subfigure}[t]{0.45\textwidth}
        \centering
        \includegraphics[width=1\textwidth]{./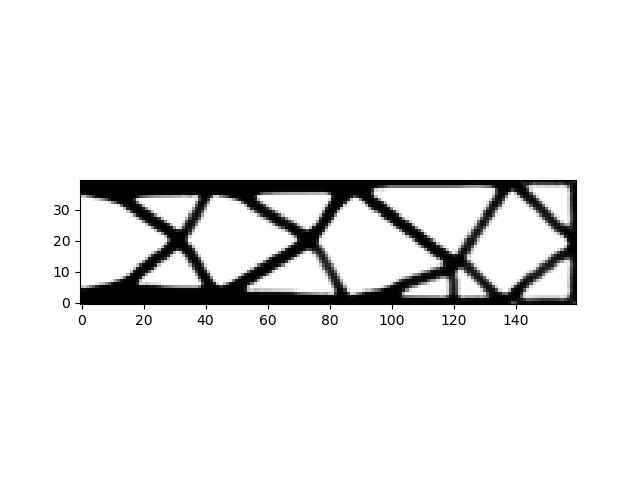}
        \caption{Exact method.}
      \end{subfigure} \hfill
      \begin{subfigure}[t]{0.45\textwidth}
        \centering
        \includegraphics[width=1\textwidth]{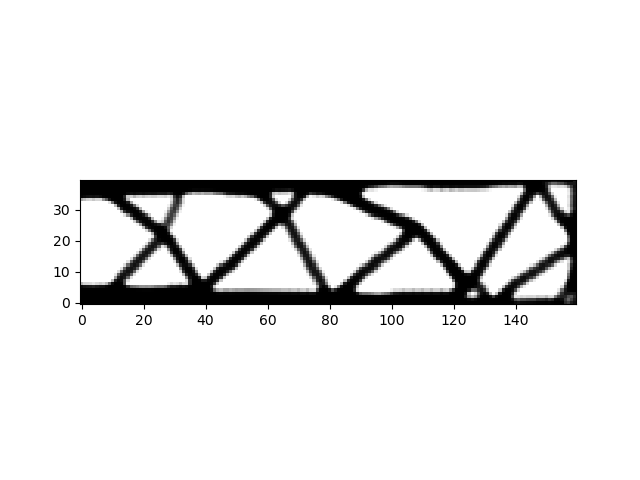}
        \caption{Trace estimation method with 10 Hadamard basis probing vectors.}
      \end{subfigure}
      \caption{Optimal topologies of the mean compliance minimization problem with a high rank $\bm{F}$ using continuation SIMP.}
      \label{fig:mean_high_rank}
    \end{figure*}

    \begin{table*}[!htbp]
     \centering
     \caption{Summary statistics of the load compliances of the optimal solution of the mean compliance minimization problem with a high rank $\bm{F}$ using the exact and trace estimation methods to evaluate the mean compliance. 10 Hadamard basis probing vectors were used in the trace estimator.}
     \begin{tabular}{|c|c|c|}
      %1.1 - 0.7 - 0.5
      \hline
      \multirow{2}{5em}{Compliance Stat} & \multicolumn{2}{c|}{Value} \\\cline{2-3}
      & Exact & Trace estimation \\
      \hline
      \hline
      $\mu_C$ (Nmm) & 2084.3 & 1094.1 (uncorrected approx) / 2226.4 (exact) \\
      \hline
      $\sigma_C$ (Nmm) & 2226.6 & 2412.0 \\
      \hline
      $C_{max}$ (Nmm) & 15971.9 & 16793.3 \\
      \hline
      $C_{min}$ (Nmm) & 171.0 & 152.0 \\
      \hline
      $V$ & 0.400 & 0.400 \\
      \hline
      $Time$ (s) & 2123.0 & 485.3 \\
      \hline
     \end{tabular}
     \label{tab:mean_stats_high_rank}
    \end{table*}

    \begin{figure*}
      \begin{subfigure}[t]{0.45\textwidth}
        \centering
        \includegraphics[width=1\textwidth]{./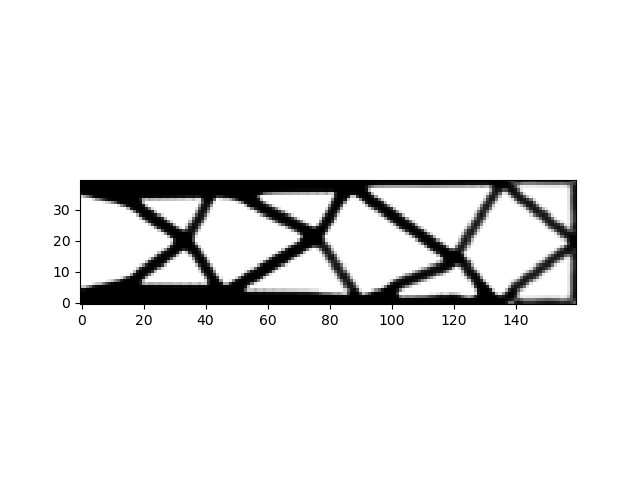}
        \caption{Exact method.}
      \end{subfigure} \hfill
      \begin{subfigure}[t]{0.45\textwidth}
        \centering
        \includegraphics[width=1\textwidth]{./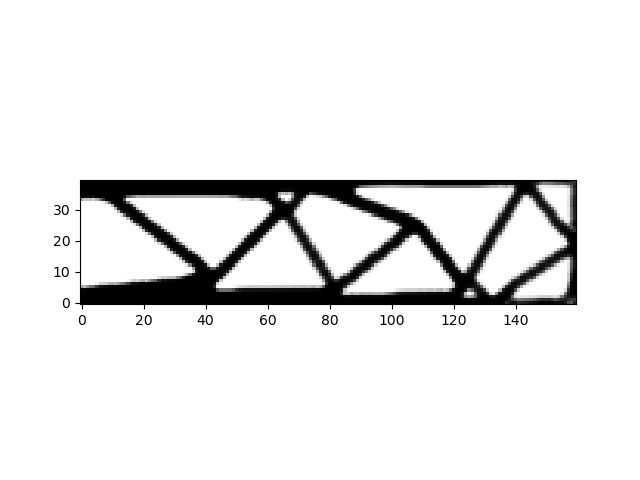}
        \caption{Corrected diagonal estimation method with 10 Hadamard basis probing vectors using the estimator in Eq. \ref{eqn:corrected_estimator}.}
      \end{subfigure}
      \caption{Optimal topologies of the mean-std compliance minimization problem with high rank $\bm{F}$ using continuation SIMP.}
      \label{fig:mean_std_high_rank}
    \end{figure*}

    \begin{table*}[!htbp]
     \centering
     \caption{Summary statistics of the load compliances of the optimal solution of the mean-std compliance minimization problem with a high rank $\bm{F}$ using exact and the corrected diagonal estimation method with 10 Hadamard basis probing vectors to evaluate the mean-std compliance.}
     \begin{tabular}{|c|c|c|}
      \hline
      \multirow{2}{5em}{Compliance Stat} & \multicolumn{2}{c|}{Value} \\\cline{2-3}
      & Exact & Diagonal estimation \\
      \hline
      \hline
      $\mu_C$ (Nmm) & 2151.9 & 2336.8 (corrected approx) / 2147.5 (exact) \\
      \hline
      $\sigma_C$ (Nmm) & 2149.0 & 2195.7 (corrected approx) / 2234.2 (exact) \\
      \hline
      $\mu_C + 2.0 \sigma_C$ (Nmm) & 6450.6 & 6728.3 (corrected approx) / 6616.7 (exact) \\
      \hline
      $C_{max}$ (Nmm) & 15558.2 & 15559.4 \\
      \hline
      $C_{min}$ (Nmm) & 187.4 & 161.6 \\
      \hline
      $V$ & 0.400 & 0.400 \\
      \hline
      Time (s) & 6435.4 & 650.1 \\
      \hline
     \end{tabular}
     \label{tab:mean_std_stats_high_rank}
    \end{table*}

    As shown in tables \ref{tab:mean_stats_high_rank} and \ref{tab:mean_std_stats_high_rank}, the SVD-based methods are slower than the approximation schemes proposed when the rank of the loads is high. This is because the number of non-zero singular values will be 100 which is 10x the number of probing vectors used. In the mean compliance minimization, a 10 speedup is achieved which is consistent with the expectation. In the mean-std compliance minimization, the diagonal estimation method requires 20 linear system solves so only a factor of 5 speedup is achieved with the approximate method compared to the SVD-based method.

  \subsection{3D cantilever beam problem}

    A 3D version of the 2D cantilever beam test problem used above was also solved using the methods proposed in this paper. The problem settings are described and the results are shown below.

    A 60 mm x 20 mm x 20 mm 3D cantilever beam was used with hexahedral elements of cubic shape and side length of 1 mm. The loads $\bm{F}_1$, $\bm{F}_2$ and $\bm{F}_3$ were positioned at (60, 10, 10), (30, 20, 10) and (40, 0, 10) where the coordinates represent the length, height and depth respectively. A value of $R = 10$ was used. The remaining loads and multipliers were sampled from the same distributions as the 2D problem. A density filter radius of 3 mm was also used for the 3D problem. The same volume constrained mean compliance minimization and volume constrained mean-std compliance minimization problems were solved.

    \subsubsection{Mean compliance minimization}

      \begin{figure*}
        \begin{subfigure}[t]{0.45\textwidth}
          \centering
          \includegraphics[width=1\textwidth]{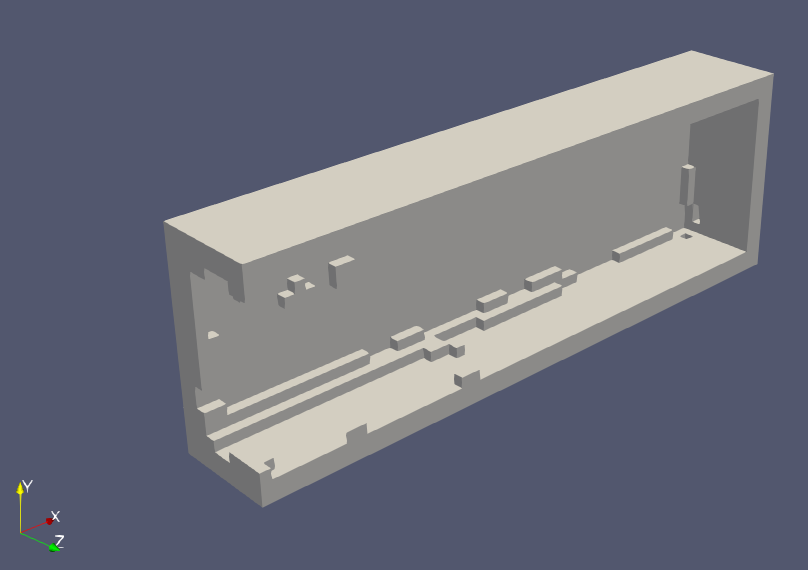}
          \caption{Left half}
        \end{subfigure} \hfill
        \begin{subfigure}[t]{0.45\textwidth}
          \centering
          \includegraphics[width=1\textwidth]{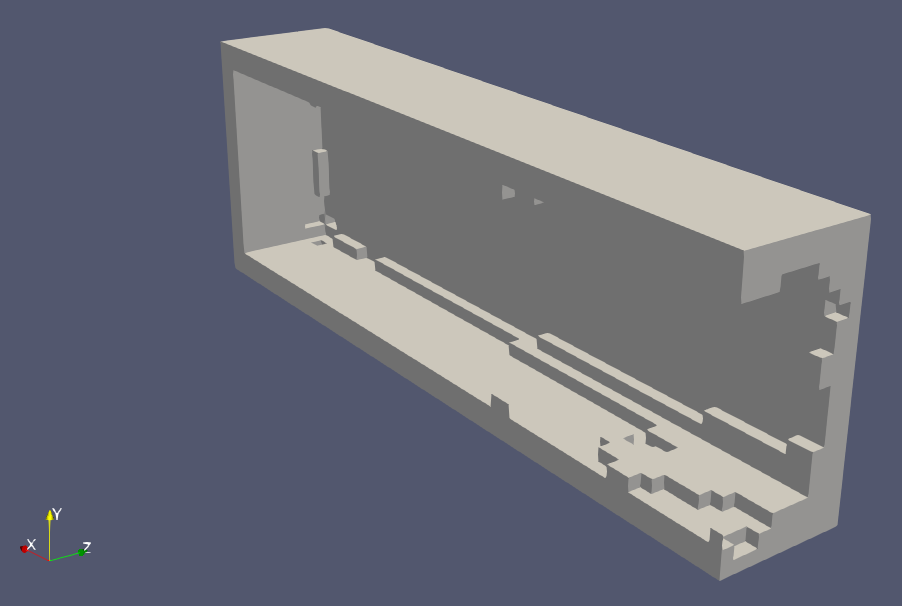}
          \caption{Right half}
        \end{subfigure}
        \caption{Cut views of the optimal topologies of the 3D mean compliance minimization problem using exact method.}
        \label{fig:exact_mean_3d}
      \end{figure*}

      \begin{figure*}
        \begin{subfigure}[t]{0.45\textwidth}
          \centering
          \includegraphics[width=1\textwidth]{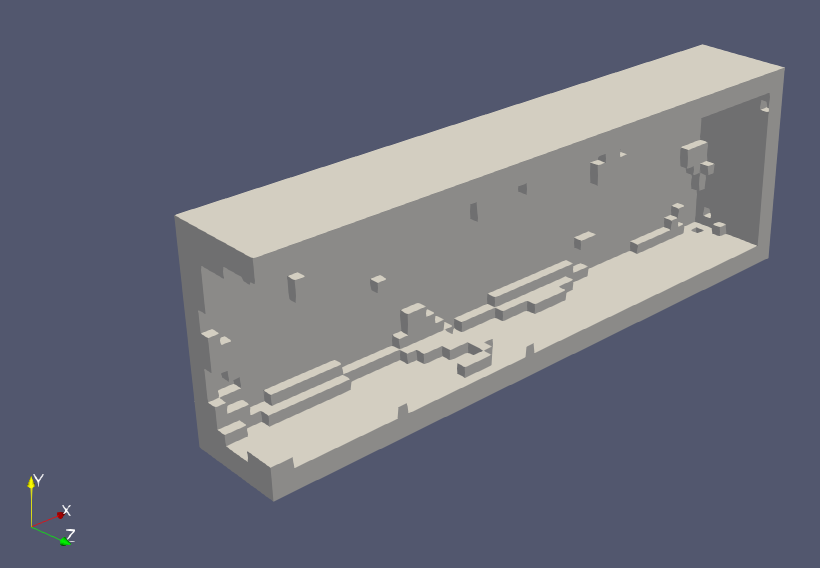}
          \caption{Left half}
        \end{subfigure} \hfill
        \begin{subfigure}[t]{0.45\textwidth}
          \centering
          \includegraphics[width=1\textwidth]{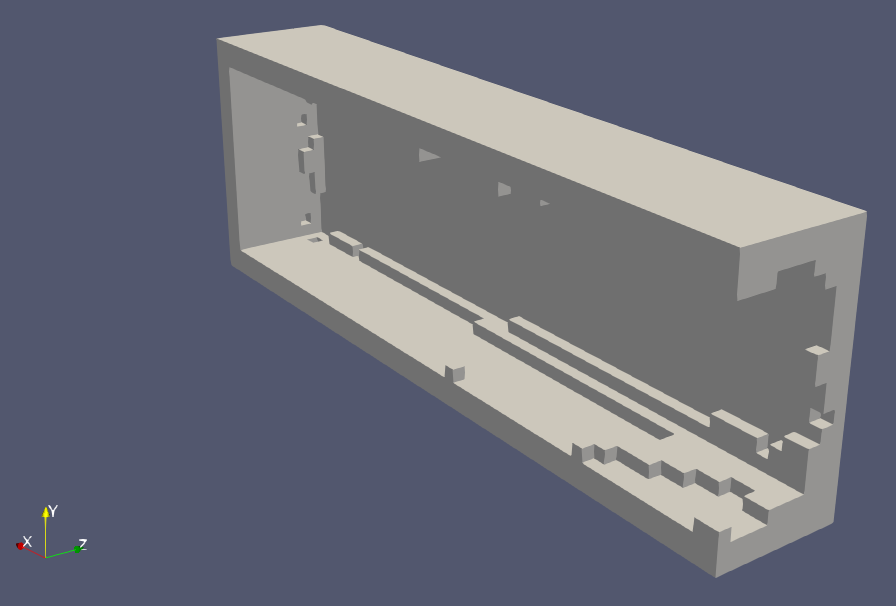}
          \caption{Right half}
        \end{subfigure}
        \caption{Cut views of the optimal topologies of the 3D mean compliance minimization problem using the trace estimation method.}
        \label{fig:trace_mean_3d}
      \end{figure*}

      \begin{table*}[!htbp]
       \centering
       \caption{Summary statistics of the load compliances of the optimal solution of the 3D mean compliance minimization problem using the exact and trace estimation methods to evaluate the mean compliance. 10 Hadamard basis probing vectors were used in the trace estimator.}
       \begin{tabular}{|c|c|c|}
        %1.1 - 0.7 - 0.5
        \hline
        \multirow{2}{5em}{Compliance Stat} & \multicolumn{2}{c|}{Value} \\\cline{2-3}
        & Exact & Trace estimation \\
        \hline
        \hline
        $\mu_C$ (Nmm) & 22072.1 & 24710.5 (uncorrected approx) / 22264.3 (exact) \\
        \hline
        $\sigma_C$ (Nmm) & 16628.7 & 17055.2 \\
        \hline
        $C_{max}$ (Nmm) & 184055.0 & 190599.9 \\
        \hline
        $C_{min}$ (Nmm) & 1785.8 & 1790.9 \\
        \hline
        $V$ & 0.400 & 0.400 \\
        \hline
        $Time$ (s) & 167321.2 & 6595.2 \\
        \hline
       \end{tabular}
       \label{tab:mean_stats_3d}
      \end{table*}

      The 3D cantilever beam problem described above was solved using the proposed approximate methods with the objective of minimizing the mean compliance subject to a volume fraction constraint with a limit of 0.4. Table \ref{tab:mean_stats_3d} shows the statistics of the final optimal solutions obtained by minimizing the mean compliance subject to the volume fraction constraint using the naive exact approach and the trace estimation method to evaluate the mean compliance. 10 Hadamard basis probing vectors were used in the trace estimator. The optimal topologies are shown in Figures \ref{fig:exact_mean_3d} and \ref{fig:trace_mean_3d}. Similar results to the 2D case can be observed where the designs obtained are different but somewhat reasonable. The proposed method converged in a small fraction of the time that the naive method took to converge. However, the design produced by the trace estimation method was worse than the exact method's which is to be expected since an approximate objective was minimized. Finally, note that the corrected estimate is close to the exact value.

    \subsubsection{Mean-std compliance minimization}

      \begin{figure*}
        \begin{subfigure}[t]{0.45\textwidth}
          \centering
          \includegraphics[width=1\textwidth]{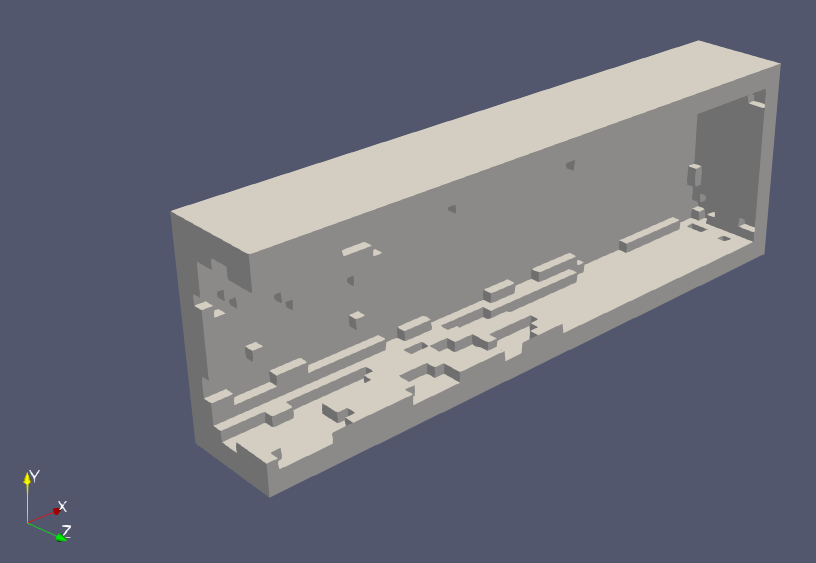}
          \caption{Left half}
        \end{subfigure} \hfill
        \begin{subfigure}[t]{0.45\textwidth}
          \centering
          \includegraphics[width=1\textwidth]{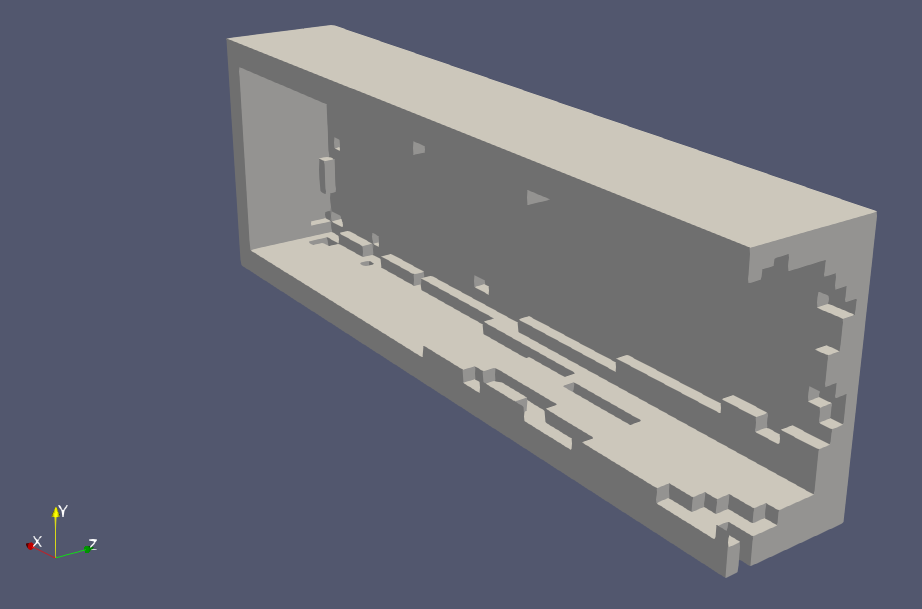}
          \caption{Right half}
        \end{subfigure}
        \caption{Cut views of the optimal topologies of the 3D mean-std compliance minimization problem using the exact method.}
        \label{fig:exact_mean_std_3d}
      \end{figure*}

      \begin{figure*}
        \begin{subfigure}[t]{0.45\textwidth}
          \centering
          \includegraphics[width=1\textwidth]{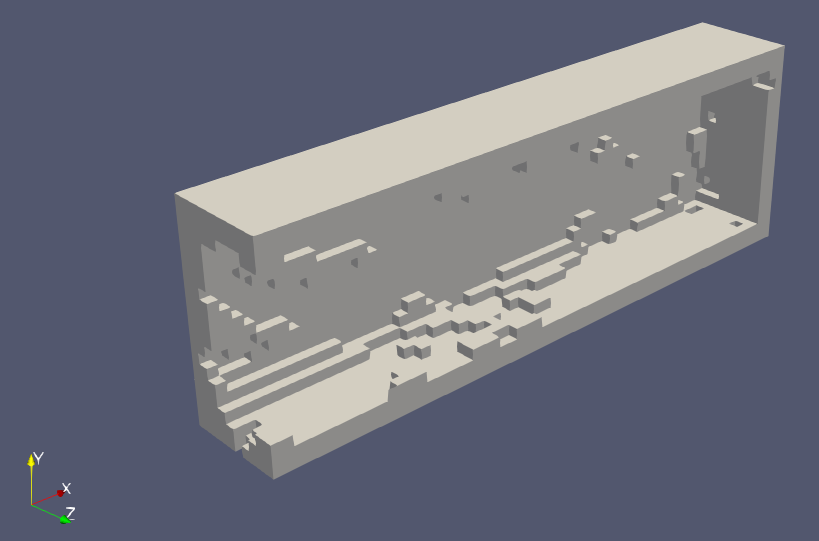}
          \caption{Left half}
        \end{subfigure} \hfill
        \begin{subfigure}[t]{0.45\textwidth}
          \centering
          \includegraphics[width=1\textwidth]{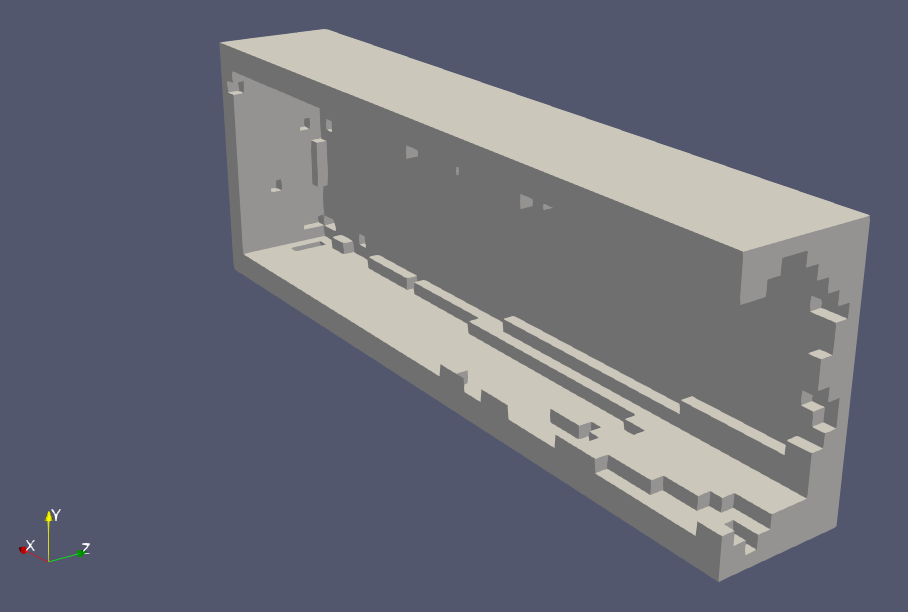}
          \caption{Right half}
        \end{subfigure}
        \caption{Cut views of the optimal topologies of the 3D mean-std compliance minimization problem using the corrected diagonal estimation method.}
        \label{fig:diagonal_mean_std_3d}
      \end{figure*}

      \begin{table*}[!htbp]
       \centering
       \caption{Summary statistics of the load compliances of the optimal solution of the mean-std compliance minimization problem using exact and the corrected diagonal estimation method with 10 Hadamard basis probing vectors to evaluate the mean-std compliance.}
       \begin{tabular}{|c|c|c|}
        \hline
        \multirow{2}{5em}{Compliance Stat} & \multicolumn{2}{c|}{Value} \\\cline{2-3}
        & Exact & Diagonal estimation \\
        \hline
        \hline
        $\mu_C$ (Nmm) & 22216.7 & 2240.8 (corrected approx) / 22145.9 (exact) \\
        \hline
        $\sigma_C$ (Nmm) & 16220.2 & 16501.4 (corrected approx) / 16510.0 (exact) \\
        \hline
        $\mu_C + 2.0 \sigma_C$ (Nmm) & 54848.8 & 55423.7 (corrected approx) / 55366.0 (exact) \\
        \hline
        $C_{max}$ (Nmm) & 176153.2 & 182209.8 \\
        \hline
        $C_{min}$ (Nmm) & 1872.0 & 1850.4 \\
        \hline
        $V$ & 0.400 & 0.400 \\
        \hline
        Time (s) & 39935.4 & 10949.3 \\
        \hline
       \end{tabular}
       \label{tab:mean_std_stats_3d}
      \end{table*}

      Similarly, Table \ref{tab:mean_std_stats_3d} shows the statistics of the final solutions of the 3D mean-std minimization problem solved using the naive exact approach and the corrected diagonal estimator method with 10 Hadamard basis probing vectors. The optimal topologies are shown in Figures \ref{fig:exact_mean_std_3d} and \ref{fig:diagonal_mean_std_3d}. Both algorithms converged to reasonable and feasible designs. Additionally, as expected the exact mean-std minimization converged to a solution with a lower standard deviation but a higher mean compliance compared to the exact mean minimization. However, due to the approximation error and non-convexity of the problems, the exact and approximate mean-std algorithms converged to solutions with a lower mean and std compared to the approximate mean algorithm. Finally as expected, the exact method took significantly longer to converge than the diagonal estimation method.

\section{Conclusion} \label{sec:conclusion}

  In this paper, two approximate methods were proposed to handle load uncertainty in compliance topology optimization problems where the uncertainty is described in the form of a set of finitely many loading scenarios. By re-formulating the function as a trace or diagonal estimation problem, significant performance improvements were achieved over the exact methods. Such improvement was demonstrated via complexity analysis and computational experiments. The methods proposed were shown to work well in practice while having a different time complexity profile.

\section{Acknowledgments}

  This research did not receive any specific grant from funding agencies in the public, commercial, or not-for-profit sectors.

\section{Conflict of Interest}

  The authors have no conflict of interest to declare.

\newpage

\appendix
\section{Partial derivative of inverse quadratic form}

In this section, it will be shown that the $i^{th}$ partial derivative of:
\begin{align}
  f(\bm{x}) & = \bm{v}^T (\bm{A}(\bm{x}))^{-1} \bm{v}
\end{align}
is
\begin{align}
  \frac{\partial f}{\partial x_i} & = -\bm{y}^T \frac{\partial \bm{A}}{\partial x_i} \bm{y}^T
\end{align}
where $\bm{A}$ is a matrix-valued function of $\bm{x}$, $\bm{v}$ is a constant vector and $\bm{y} = \bm{A}^{-1} \bm{v}$ is a an implicit function of $\bm{x}$ because $\bm{A}$ is a function of $\bm{x}$.

\begin{align}
  \bm{v} & = \bm{A} \bm{y} \\
  \bm{0} & = \bm{A} \frac{\partial \bm{y}}{\partial x_i} + \frac{\partial \bm{A}}{\partial x_i} \bm{y} \\
  \frac{\partial y}{\partial x_i} & = - \bm{A}^{-1} \frac{\partial \bm{A}}{\partial x_i} \bm{y} \\
  f(\bm{x}) & = \bm{v}^T \bm{A}^{-1} \bm{v} \\
  & = \bm{y}^T \bm{A} \bm{y} \\
  \frac{\partial f}{\partial x_i} & = 2 \bm{y}^T \bm{A} \frac{\partial \bm{y}}{\partial x_i} + \bm{y}^T \frac{\partial \bm{A}}{\partial x_i} \bm{y} \\
  & = - 2 \bm{y}^T \bm{A} \bm{A}^{-1} \frac{\partial \bm{A}}{\partial x_i} \bm{y} + \bm{y}^T \frac{\partial \bm{A}}{\partial x_i} \bm{y} \\
  & = - 2 \bm{y}^T \frac{\partial \bm{A}}{\partial x_i} \bm{y} + \bm{y}^T \frac{\partial \bm{A}}{\partial x_i} \bm{y} \\
  & = - \bm{y}^T \frac{\partial \bm{A}}{\partial x_i} \bm{y}
\end{align}

\newpage

%\begin{acknowledgements}
%If you'd like to thank anyone, place your comments here
%and remove the percent signs.
%\end{acknowledgements}

% Authors must disclose all relationships or interests that 
% could have direct or potential influence or impart bias on 
% the work: 
%
% \section*{Conflict of interest}
%
% The authors declare that they have no conflict of interest.

% BibTeX users please use one of
\bibliographystyle{spbasic}      % basic style, author-year citations
\bibliography{references}   % name your BibTeX data base

% Non-BibTeX users please use
% \begin{thebibliography}{}
%
% and use \bibitem to create references. Consult the Instructions
% for authors for reference list style.
%
% \bibitem{RefJ}
% Format for Journal Reference
% Author, Article title, Journal, Volume, page numbers (year)
% Format for books
% \bibitem{RefB}
% Author, Book title, page numbers. Publisher, place (year)
% etc
% \end{thebibliography}

\end{document}